\begin{document}

\newcommand{\bk}{\mathbf{k}}
\newcommand{\br}{\mathbf{r}}
\newcommand{\eps}{\boldsymbol{\varepsilon}}
\newcommand{\epsk}{\boldsymbol{\varepsilon}_\mathbf{k}}
\def\keps{\mathbf{k}\boldsymbol{\varepsilon}}

\newcommand{\matgamma}{\boldsymbol{\gamma}}
\newcommand{\matDelta}{\boldsymbol{\Delta_{\mathrm{dd}}}}

\newcommand{\ga}{\Gamma_1}
\newcommand{\gb}{\Delta \gamma}

\newcommand{\kJ}{k_J}
\newcommand{\PI}{{_\mathrm{PI}}}

\newcommand{\tr}{{\rm Tr}}
\newcommand*{\etal}{\textit{et al.}}
\def\vec#1{\mathbf{#1}}
\def\ket#1{\left|#1\right\rangle}
\def\bra#1{\langle#1|}
\def\dm{\boldsymbol{\wp}}
\def\ldx{\eta_x}

\newcommand{\cor}{\mathcal{C}_{ij}(t)}
\newcommand{\corcl}{\mathcal{C}_{ij}^{\mathrm{\,cl}}(t)}
\newcommand{\corem}{\mathcal{C}_{ij}^{\mathrm{em}}(t)}
\newcommand{\cork}{\mathcal{C}_{ij}(\mathbf{k},t)}
\newcommand{\corclk}{\mathcal{C}^\mathrm{\,cl}_{\mathbf{k}\eps} (t)}
\newcommand{\coremk}{\mathcal{C}^\mathrm{em}_{\mathbf{k}\eps} (t)}
\newcommand{\corkm}{\mathcal{C}_{ij}^\mathrm{ex}(\mathbf{k},t)}
\newcommand{\corkmz}{\mathcal{C}_{ij}^\mathrm{ex}(\mathbf{k})}
\newcommand{\cormz}{\mathcal{C}_{ij}^\mathrm{ex}}
\newcommand{\corkgs}{C_{ij}^\mathrm{\,gs}(\mathbf{k})}
\newcommand{\corkfs}{C_{ij}^\mathrm{\,Fock}(\mathbf{k})}
\newcommand{\corkth}{C_{ij}^\mathrm{\,th}(\mathbf{k})}
\newcommand{\corkzmz}{\mathcal{C}_{ij}^\mathrm{ex}(\mathbf{k}_0)}
\newcommand{\corkzzmz}{\mathcal{C}_{ij}^\mathrm{ex}(k_z,0)}
\newcommand{\corqmz}{\mathcal{C}_{ij}^\mathrm{ex}(\mathbf{q})}
\newcommand{\corqzmz}{\mathcal{C}_{ij}^\mathrm{ex}(q_z)}

\newtheorem{theorem}{Theorem}
\newtheorem{corollary}{Corollary}
\newtheorem{lemma}{Lemma}
\newtheorem{definition}{Definition}
\newtheorem{proposition}{Proposition}
\newtheorem{example}{Example}
\newtheorem{remark}{Remark}

\def\ftq#1{\mathcal{F}_{\mathbf{k}}\left[#1\right]}
\def\ftxiij#1{\mathcal{F}^{-1}_{\mathbf{r}}\left[#1\right]}
\def\ftxi#1{\mathcal{F}^{-1}_{\mathbf{r}}\left[#1\right]}
\def\ftqz#1{\mathcal{F}_{k_z}\left[#1\right]}
\def\ftxiz#1{\mathcal{F}^{-1}_{r_z}\left[#1\right]}

\newcommand{\gof}{\gamma}
\newcommand{\dof}{\Delta_{\mathrm{dd}}}

\newcommand{\bmu}{{\bm \mu}}
\newcommand{\bnu}{{\bm \nu}}
\newcommand{\bphi}{{\bm \phi}}
\newcommand{\bpsi}{{\bm \psi}}
\newcommand{\be}{{\bm e}}
\newcommand{\bff}{{\bm f}}

\def\vec#1{\bm{#1}}
\def\ket#1{|#1\rangle}
\def\bra#1{\langle#1|}
\def\braket#1#2{\langle#1|#2\rangle}
\def\ketbra#1#2{|#1\rangle\langle#2|}

\def\G#1{\Gamma_{{}_#1}}

\title{Cooperative spontaneous emission from indistinguishable atoms in arbitrary motional quantum states}

\date{\today}

\author{Fran\c{c}ois Damanet$^{1}$, Daniel Braun$^{2}$ and John Martin$^1$}

\affiliation{$^1$Institut de Physique Nucl\'eaire, Atomique et de
Spectroscopie, CESAM, Universit\'e de Li\`ege, B\^atiment B15, B - 4000
Li\`ege, Belgium}
\affiliation{$^2$Institut f\"ur theoretische Physik, Universit\"at T\"ubingen, 72076 T\"ubingen,  Germany}

\begin{abstract}
We investigate superradiance and subradiance of indistinguishable atoms with quantized
motional states, starting with an initial total state that factorizes
over the internal and external degrees of freedom of the atoms.
Due to the permutational symmetry of the motional state, the cooperative
spontaneous emission, governed by a recently derived master equation 
[F. Damanet \emph{et al.}, Phys.\ Rev.\ A \textbf{93}, 022124 (2016)], depends only on two decay rates $\gamma$ and $\gamma_0$ and
a single parameter $\Delta_{\mathrm{dd}}$ describing the dipole-dipole shifts. We solve the dynamics exactly for $N=2$ atoms, numerically for up to 30 atoms, and obtain the large-$N$-limit by a
mean-field approach. We find that there is
a critical difference $\gamma_0-\gamma$ that depends on $N$ beyond
which superradiance is 
lost. We show that exact non-trivial dark states (i.e.~states other than the ground
state with vanishing spontaneous emission)
only exist for $\gamma=\gamma_0$, and that those states (dark when $\gamma=\gamma_0$) are subradiant when $\gamma<\gamma_0$.

\end{abstract}
\pacs{03.65.Yz, 02.50.Ga, 37.10.Vz, 03.75.Gg}


\maketitle

\section{Introduction}

Cooperative spontaneous emission of light from excited atoms, which results from their common coupling with the surrounding electromagnetic field, is a central field of research in quantum optics. In a seminal paper~\cite{Dic54}, Dicke showed that spontaneous
emission can be strongly enhanced when atoms are close to each other in comparison to the wavelength of the emitted radiation. This phenomenon, called superradiance, and its counterpart corresponding to reduced spontaneous emission, called subradiance, were first considered for distinguishable atoms at fixed
positions. Depending on the geometrical arrangement of the atoms in
space, deeper analyses showed later that virtual photon exchanges
between atoms were likely to destroy
superradiance~\cite{Fri72,Fri74,Gro82}. Due to the complexity involved
by the exact treatment of these dipole-dipole interactions, analytical
characterizations of superradiance have been found only for particular geometries or for a small number of atoms \cite{Cof78,Gro82,Fre86,Fre87,Ric90,Fen13}. Yet, cooperative emission processes are not restricted to small atomic samples. They can also be observed in dilute atomic systems for which the near-field contributions of the dipole-dipole interactions are insignificant. Recent studies concern single-photon superradiance~\cite{Scu06,Scu09,Oli14, Kon16}, subradiance in cold atomic gases~\cite{Bie12, Gue16}, collective Lamb-shift \cite{Roh10, Mei14} and localization of light \cite{Akk08, Max15}. Moreover, super- and subradiance can be explained using a quantum multipath interference approach and can be simulated from the measurement of higher-order-intensity-correlation functions on atoms separated by a distance larger than the emission wavelength~\cite{Wie11,Wie15}.

The atomic motion can have a significant influence on the  spontaneous emission and scattering of light~\cite{Wil92, Jav94, You94, Ber97, Bra08, Li13}, and vice versa (see e.g.~\cite{Dal85, Min87, Coh98, Wie99,Dom03, Sch14, Sar14}). However, its role on cooperative emission processes is not yet fully
understood, especially in large laser driven atomic
systems~\cite{Zhu16}. The interplay of atomic motion and cooperative
processes has been the subject of recent
experiments~\cite{Lab06,Pel14, Jen16, Jen16b, Bro16} and can lead to
interesting effects such as supercooling of atoms \cite{Min16} or
superradiant Rayleigh scattering from a Bose-Einstein condensate (BEC) \cite{Ino99}. In
hot atomic samples, the motion can be treated classically and leads to
Doppler broadenings of the spectral lines. In (ultra)cold atomic
samples, the quantum nature of the motion and the indistinguishability
must be taken into account, as they also lead to strong modifications
of the dynamics. In this paper, we study super- and subradiance from
indistinguishable atoms, taking into account recoil, quantum
fluctuations of the atomic positions, and quantum statistics. To this end, we solve a recently derived master equation describing the cooperative spontaneous emission of light by $N$ two-level atoms in arbitrary quantum motional states~\cite{Dam16}.

Indistinguishability of atoms profoundly changes their internal
dynamics as compared to that of distinguishable atoms.  
For distinguishable atoms with classical positions, the solution of
the master equation depends considerably on the geometrical
arrangement of the atoms in space. When describing each atom as a
two-level atom, the internal state of the atoms
thus evolves in the full Hilbert space of dimension $2^N$. The same
general considerations can be made when the atomic positions are
treated quantum mechanically since despite changed rates and level
shifts the master equation \cite{Dam16}
then retains the same global form with the same Lindblad operators.
 Hence, each configuration must be dealt with case by case. However,
 for indistinguishable atoms, the global state has to be invariant under
 exchange of the atoms. For initial states that are separable
 between the internal degrees of freedom and the motional degrees of
 freedom, both internal and motional states must be invariant under permutation of atoms. Furthermore, on the time scale of
 the spontaneous emission in the optical domain, the motional state
 can be considered frozen such that the permutational symmetry of the
 motional state prevails throughout the entire emission process. This
 leads to permutationally invariant average Lindblad-Kossakowski matrix of
 emission rates and permutationally invariant dipole shifts, which
 limits the quantum dynamics also of the internal degrees of freedom to the
 permutation-invariant subspace of dimension $\mathcal{O}(N^2)$ of the global Hilbert space \cite{Cha08, Bar10}, thus greatly
 simplifying  the problem.  However, the quantum fluctuations of the
 positions of the atoms modify the cooperative effect
 of collective emission, thus leading back from superradiance to
 individual spontaneous emission for large enough quantum uncertainty
 in the positions.  

The paper is organized as follows. In section II, we discuss the
general form of the master equation for the atomic internal dynamics
derived in \cite{Dam16} in the case of indistinguishable atoms. In
particular, we show that the master equation preserves permutation
invariance of the internal state. In section III, we write the master
equation in the coupled spin basis as it is particularly suited for
permutation-invariant states. Finally, in section IV, we solve the
master equation analytically for $N=2$ and numerically for up to
$N=30$ atoms in order to study the impact of the quantization of the
atomic motion on super- and subradiance.

\section{General form of the master equation for indistinguishable atoms}

\subsection{Symmetry of the initial state}

Let us consider $N$ indistinguishable atoms in a mixture $\rho_A$. Each wave function of the mixture has to be either symmetric (bosons) or antisymmetric (fermions) under exchange of atoms. Let $P_\pi$ denote the permutation operator of a permutation $\pi$ defined through exchange of the atomic labels. We have [see Eq.~(\ref{eq:full}) in Appendix A]
\begin{equation}
P_\pi\rho_A P_{\pi'}^\dagger= (\pm 1)^{p_\pi + p_{\pi'}}\rho_A\quad \forall\,\pi, \pi'
\end{equation}
where $p_\pi$ is the parity
of the permutation $\pi$ (even 
or odd), and $(\pm 1)^{p_\pi}$ the 
phase factor picked up accordingly for bosons (upper sign) or fermions
(lower sign). Moreover, the Born approximation performed in
\cite{Dam16} assumes that the initial atomic state is separable, i.e.\
$\rho_A(0)=\rho_A^\mathrm{in}(0)\otimes \rho_A^\mathrm{ex}$ with
$\rho_A^\mathrm{ex}$ the motional density operator at time $t=0$. This implies that both internal and external states are invariant under permutation of atoms [see Eq.~(\ref{eq:bos}) in Appendix A],
\begin{equation}
\begin{aligned}
  \label{eq:bos}
  &P^\mathrm{in}_\pi\rho^\mathrm{in}_A(0) P_{\pi}^{\mathrm{in}\dagger}=\rho^\mathrm{in}_A(0)  \quad\forall\, \pi, \\[2pt]
  &P^\mathrm{ex}_\pi\rho^\mathrm{ex}_A  P_{\pi}^{\mathrm{ex}\dagger}=\rho^\mathrm{ex}_A \quad\forall\, \pi.
  \end{aligned}
\end{equation}

\subsection{Standard form}

In the interaction picture, the master equation for the reduced density matrix $\rho_A^\mathrm{in}(t)$ describing the internal dynamics of the system $A$ composed of $N$ indistinguishable atoms takes the standard form~\cite{Dam16}
\begin{equation}\label{meqsummary}
\begin{aligned}
\frac{d\rho_A^\mathrm{in}(t)}{dt} &= \mathcal{L}\left[\rho_A^\mathrm{in}(t) \right] =  - \frac{i}{\hbar}\left[ H_\mathrm{dd}, \rho_A^\mathrm{in}(t) \right] + \mathcal{D}\left[\rho_A^\mathrm{in}(t)\right],
\end{aligned}
\end{equation}
with the Liouvillian superoperator $\mathcal{L}\left[\cdot\right]$ involving the dipole-dipole Hamiltonian
\begin{equation}\label{Hamildipdip}
H_\mathrm{dd} = \hbar \Delta_{\mathrm{dd}}\, \sum_{i \neq j }^{N} \sigma_+^{(i)}\sigma_-^{(j)},
\end{equation}
with $\Delta_{\mathrm{dd}}$ the dipole-dipole shifts, and the dissipator
\begin{equation}
\label{Dissip}
\begin{aligned}
\mathcal{D}\left[\rho_A^\mathrm{in}\right] ={}& \gamma \, \sum_{i\neq j}^N \left(\sigma_-^{(j)}\rho_A^\mathrm{in}\sigma_+^{(i)}-\frac{1}{2}\left\{\sigma_+^{(i)}\sigma_-^{(j)},\rho_A^\mathrm{in}\right\}\right) \\
& +  \gamma_0\, \sum_{i=1}^N \left(\sigma_-^{(i)}\rho_A^\mathrm{in}\sigma_+^{(i)}-\frac{1}{2}\left\{\sigma_+^{(i)}\sigma_-^{(i)},\rho_A^\mathrm{in}\right\}\right),
\end{aligned}
\end{equation}
with $\gamma_0$ the single-atom spontaneous emission rate and $\gamma$ the cooperative (off-diagonal) decay rates. In Eqs.~(\ref{Hamildipdip}) and (\ref{Dissip}), $\sigma_{+}^{(j)} = (\ket{e}\bra{g})_j$ and $\sigma_{-}^{(j)} =
(\ket{g}\bra{e})_j$ are the ladder operators for atom $j$ with
$|g\rangle$ ($|e\rangle$) the lower (upper) atomic level of energy
$-\hbar \omega_0/2$ ($\hbar \omega_0/2$). Note that in
Eq.~(\ref{Hamildipdip}), we do not consider diagonal terms ($i = j$)
corresponding to the Lamb-shifts. They can be discarded by means of a
renormalization of the atomic frequency. The fact that all
off-diagonal ($i\ne j$) decay rates are \emph{equal} and all
dipole-dipole shifts are \emph{equal} for any pairs of atoms is merely a consequence of
the indistinguishability of atoms (see Appendix B for a formal
derivation).

The master equation~(\ref{meqsummary}) is valid for
\emph{arbitrary} motional quantum states and can be applied well-beyond the Lamb-Dicke regime. All effects related to the quantization of the atomic motion are
encoded in the values taken by the dipole-dipole shift $
\Delta_{\mathrm{dd}}$ and the decay rate $\gamma$. We give their exact expressions for arbitrary motional symmetric or antisymmetric states in Appendix B [see Eqs.~(\ref{indisEXCHANGE1}) and (\ref{indisEXCHANGE2})]. They depend not only on the average atomic positions (classical atomic positions)
but also on their quantum fluctuations and correlations as described by the quantum motional (external) state
$\rho_A^\mathrm{ex}$ of the atoms. In particular, their values can
strongly depend on the statistical nature (bosonic or fermionic) of
the atoms. In the next section, we give analytical expressions of $\gamma$ for BECs in different regimes. 

\subsection{Off-diagonal decay rates $\gamma$ for Bose-Einstein condensates}

We first consider the case of a non-interacting BEC confined in an isotropic harmonic trap at zero temperature. In this case, all atoms are in the same motional state $\phi(\mathbf{r}) = e^{-|\mathbf{r}|^2/4 \ell^2}/(\sqrt{2\pi}\ell)^{3/2}$ with $\ell = \sqrt{\hbar/2M\Omega}$ the width of the spatial density, $M$ the atomic mass and $\Omega$ the trap frequency. The decay rate $\gamma$ for this motional state follows from Eq.~(\ref{beEXCHANGE1}) with $\rho_1(\mathbf{r})=|\phi(\mathbf{r})|^2$ and is given by \begin{equation}\label{gammaISO}
\gamma = \gamma_0\, e^{-\eta^2}
\end{equation}
with $\eta = k_0 \ell$ the Lamb-Dicke parameter and $k_0$ the radiation wavenumber. Since the size of a BEC typically lies in the range $10-10^3\mu\mathrm{m}$ \cite{Pet08}, significant modifications of the decay rate $\gamma$ should  be observable for internal transitions in the visible and near-infrared domain.

We now consider the case of a BEC with strong repulsive interactions at zero temperature, for which the spatial density $\rho_1(\mathbf{r})$ is given in the Thomas-Fermi approximation by 
\begin{equation}\label{TFBEC}
\rho_1(\mathbf{r}) = \begin{cases} \frac{M}{4\pi\hbar^2 a}\left[\mu - V_\mathrm{ext}(\mathbf{r})\right] & \mbox{for }\mu \geqslant V_\mathrm{ext}(\mathbf{r}), \\[4pt]
0 & \mbox{for }\mu < V_\mathrm{ext}(\mathbf{r}),\end{cases}
\end{equation}
where $a$ is the scattering length, $\mu$ is the chemical potential and $V_\mathrm{ext}(\mathbf{r}) = M \Omega^2 r^2/2$ is the harmonic trap potential. Inserting Eq.~(\ref{TFBEC}) into Eq.~(\ref{beEXCHANGE1}) yields after integration
\begin{equation}\label{gammaTF}
\gamma = 225 \gamma_0 \frac{\left(3x \cos x + (x^2 - 3) \sin x\right)^2}{x^{10}}
\end{equation}
where $x = \eta \sqrt[5]{60 N a/\ell}$ with $N$ the number of atoms in the BEC. Figure~\ref{gammaThomasFermi} shows Eq.~(\ref{gammaTF})  as a function of $x$. In particular, when $x\to 0$ (small recoil), $\gamma$ tends to $\gamma_0$.

We finally study the transition from a thermal cloud to a non-interacting BEC by considering a gas of trapped bosonic atoms in thermal equilibrium at finite temperature $T$. The spatial density of the gas is given by~\cite{Lan80}
\begin{equation}\label{eq:BECfiniteT}
\rho_1(\mathbf{r}) = \frac{1}{N(2\pi \ell^2)^{\frac{3}{2}}}\sum_{k = 1}^{\infty} \frac{z^k}{\left( 1 - e^{-2 k \beta \hbar \Omega} \right)^{\frac{3}{2}}}\, e^{-\frac{r^2}{2\ell^2} \tanh\left(\frac{k\beta\hbar\Omega}{2}\right)},
\end{equation}
where $z=e^{\beta\mu}$ is the fugacity and $\beta = 1/k_BT$ with $k_B$ the Boltzmann constant. Inserting Eq.~(\ref{eq:BECfiniteT}) into Eq.~(\ref{beEXCHANGE1}) yields after integration
\begin{equation}\label{gammaISOT}
\gamma =  \frac{\gamma_0}{N^2} \Bigg[\sum_{k =1}^{\infty} \frac{z^{k} e^{3 k \beta \hbar \Omega}}{(1- e^{k \beta \hbar \Omega})^3}\, e^{-\frac{\eta^2}{2}\coth\left(\frac{k\beta\hbar\Omega}{2}\right) }\Bigg]^2.
\end{equation}
For $z\to 0$, Eq.~(\ref{eq:BECfiniteT}) tends to a thermal cloud profile $\rho_1(\mathbf{r})=e^{-(r/2R)^2}/(2\pi R^2)^{3/2}$ with $R=\sqrt{k_BT/m\Omega^2}$ and we get
\begin{equation}\label{gammathermal}
\gamma = \gamma_0\, e^{-k_0^2R^2}
\end{equation}
where $e^{-k_0^2R^2}$ is the Debye-Waller factor. Figure~\ref{gammaBEC} shows Eq.~(\ref{gammaISOT}) as a function of $\eta$ for $\beta \hbar \Omega = 1/10$ and different values of the fugacity. The curves $\gamma(\eta)$ switch gradually from Eq.~(\ref{gammathermal}) for $z=0$ to Eq.~(\ref{gammaISO}) for $z=1$ as the fugacity is increased, as a consequence of the formation of a condensed phase (see the inset of Fig.~\ref{gammaBEC}). For fixed Lamb-Dicke parameter and $\beta\hbar\Omega$, $\gamma$ increases monotonically with the fugacity. Hence, cooperative effects will be more pronounced when all atoms are in the condensed phase (pure BEC).

\begin{figure}
\begin{center}
\includegraphics[width=0.46\textwidth]{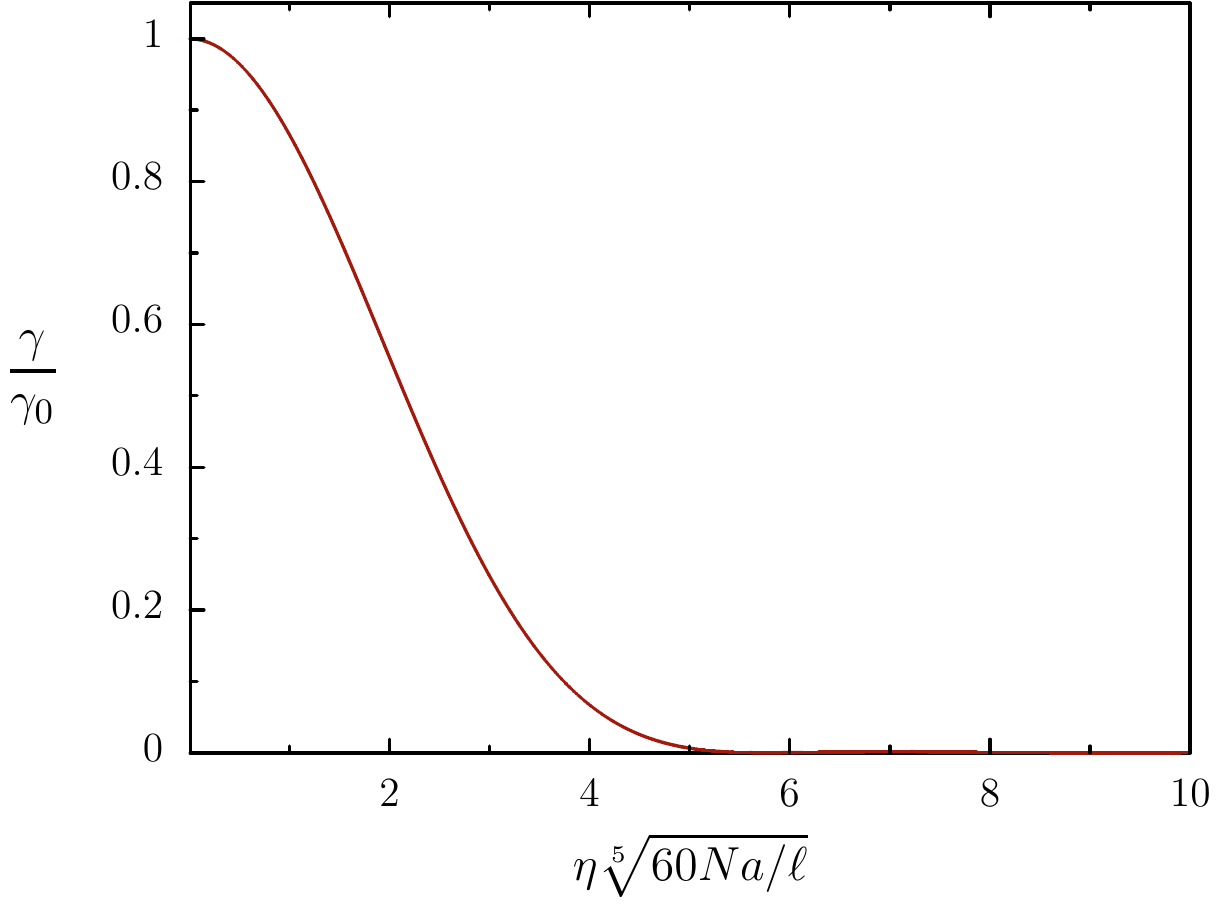}
\end{center}
\caption{Off-diagonal decay rate $\gamma$ as a function of the dimensionless variable $x = \eta \sqrt[5]{60 N a/\ell}$ for a BEC with strong repulsive interactions in the Thomas-Fermi limit.}\label{gammaThomasFermi}
\end{figure}

\begin{figure}
\begin{center}
\includegraphics[width=0.475\textwidth]{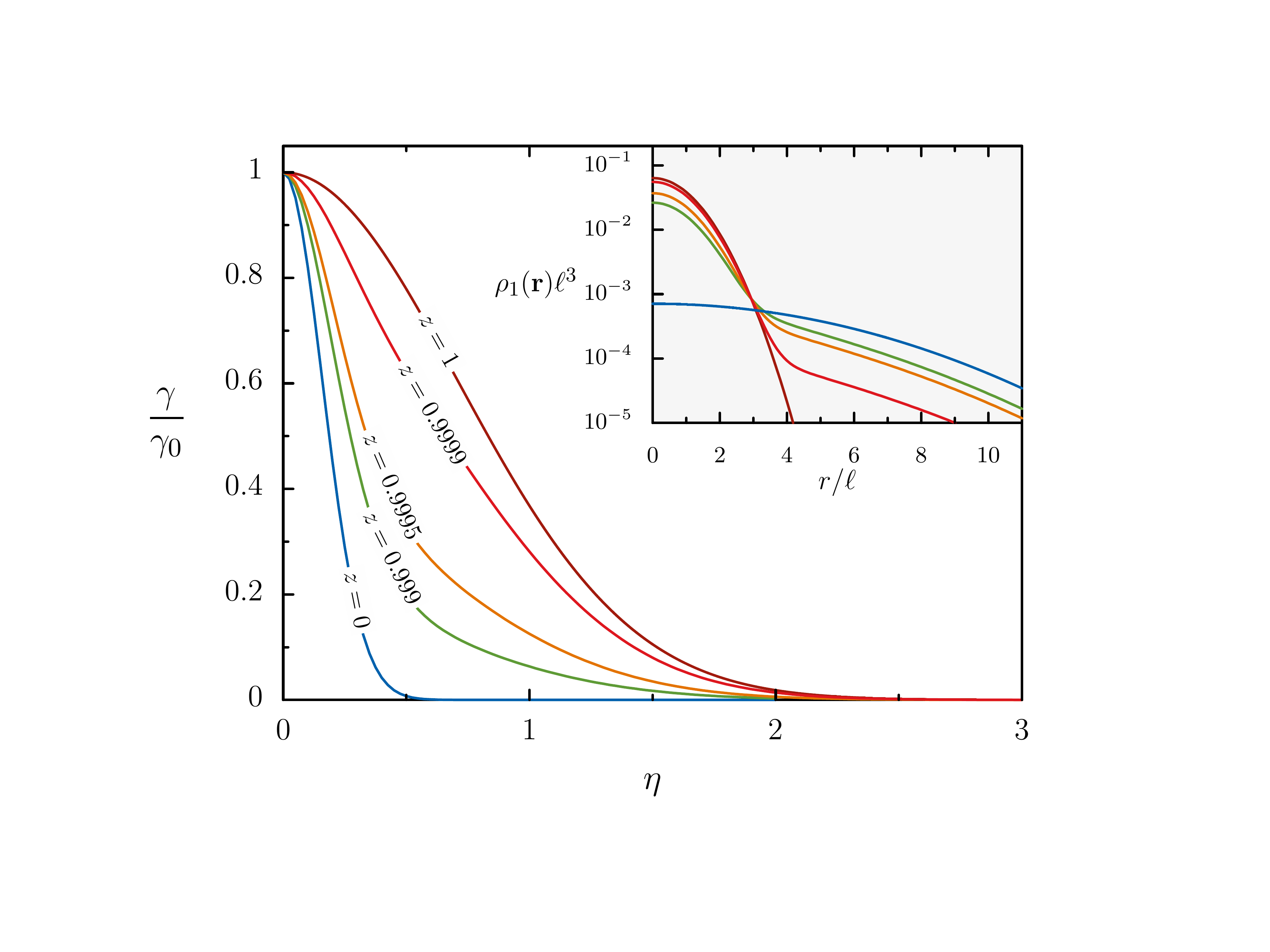}
\end{center}
\caption{Off-diagonal decay rate $\gamma$ as a function of the Lamb-Dicke parameter $\eta$ for a gas of trapped bosonic atoms at thermal equilibrium for $\beta \hbar \Omega = 1/10$ and different values of the fugacity, from $z=0$ (left) to $z=1$ (right). The inset shows spatial density profiles for the same parameters.}\label{gammaBEC}
\end{figure}

\subsection{Lindblad form}

Before we discuss the Lindblad form of the master equation (\ref{meqsummary}), let us note that the dipole-dipole Hamiltonian (\ref{Hamildipdip}) can be rewritten in terms of collective spin operators only as
\begin{equation}\label{Hamildipdip2}
H_\mathrm{dd} = \hbar \Delta_{\mathrm{dd}} \Big[ J_+ J_- - \frac{1}{2} \left( N \mathbb{1} + 2 J_z \right) \Big],
\end{equation}
where $\mathbb{1}$ is the identity operator acting on the internal atomic states, $J_\pm = \sum_{j = 1}^N \sigma_\pm^{(j)}$ are the collective spin ladder operators, and $J_z = \frac{1}{2}\sum_{j = 1}^N \sigma_z^{(j)}$ with $\sigma_z^{(j)} = (\ket{e}\bra{e} - \ket{g}\bra{g})_j$. As for the dissipator (\ref{Dissip}), it also involves individual spin operators and can be rewritten as
\begin{multline}
\label{Dissip2}
\mathcal{D}\left[\rho_A^\mathrm{in}\right] = \gamma \left( J_- \rho_A^\mathrm{in} J_+ - \frac{1}{2} \Big\{ J_+ J_-, \rho_A^\mathrm{in} \Big\} \right) \\
+ (\gamma_0-\gamma) \left( \sum_{i=1}^N \sigma_-^{(i)}\rho_A^\mathrm{in} \sigma_+^{(i)}-\frac{1}{4}\Big\{N \mathbb{1} + 2 J_z,\rho_A^\mathrm{in}\Big\}\right).
\end{multline}

The Lindblad form is obtained from the diagonalization of the $N \times N$ matrix of decay rates
\begin{equation}\label{gammaMatrix0} 
\matgamma= \begin{pmatrix}
\gamma_0& \gof & \ldots  & \gof \\
\gof & \gamma_0 & \ldots  & \gof \\
\vdots & \vdots & \ddots  & \vdots \\
\gof & \gof & \ldots & \gamma_0 \\
\end{pmatrix},
\end{equation}
with $|\gof| \leqslant \gamma_0$~\cite{Dam16}. Associated with each eigenvector with non-zero eigenvalue $\Gamma_\ell$ is a Lindblad operator $F_\ell$. Degenerate eigenvalues give rise to several Lindblad operators. The matrix (\ref{gammaMatrix0}) has eigenvalues 
\begin{subequations}
\begin{align}
& \Gamma_1 = \gamma_0 + (N-1)\gof\equiv N\gamma+\Delta\gamma,\\[3pt]
& \Gamma_2 = \gamma_0 - \gof \equiv \gb,
\end{align}
\end{subequations}
with onefold and $(N-1)$-fold degeneracy, respectively. For the dynamics
to be Markovian, the matrix $\matgamma$ has to be positive. This
implies that $-\gamma_0/(N-1)\leqslant\gamma\leqslant \gamma_0$ and
$0\leqslant \gb\leqslant \gamma_0\,N/(N-1)$. An eigenvector
$\mathbf{v}_1$ with the largest eigenvalue ($\Gamma_1$) is the vector
with all components equal to $1/\sqrt{N}$. The corresponding Lindblad
operator is $F_1=J_-/\sqrt{N}$. The remaining eigenvectors with
degenerate eigenvalue $\gb$ span the subspace $\mathbb{C}^{N-1}$
orthogonal to $\mathbf{v}_1$ and lead to Lindblad operators
$F_\ell$.
The Lindblad form of the dissipator thus reads 
\begin{multline}
\label{DissipLindblad}
\mathcal{D}\left[\rho_A^\mathrm{in}\right] = \frac{\Gamma_1}{N} \left( J_- \rho_A^\mathrm{in}  J_+ - \frac{1}{2} \Big\{ J_+  J_-, \rho_A^\mathrm{in} \Big\} \right) \\
+ \gb \left( \sum_{\ell=2}^N F_\ell\rho_A^\mathrm{in} F_\ell^\dagger-\frac{1}{2}\Big\{F_\ell^\dagger F_\ell,\rho_A^\mathrm{in}\Big\}\right). 
\end{multline}

When $\gb = 0$ ($\gamma = \gamma_0$), the system evolves under the sole action of the collective spin operator $J_-$. As a consequence, starting from an internal symmetric state, the dynamics is restricted to the symmetric subspace of dimension $N+1$. This is the superradiant regime \cite{Gro82}.

When $\gb > 0$ ($\gamma < \gamma_0$), all the additional Lindblad operators are involved in the dynamics. The superoperator multiplying $\gb$ in Eq.~(\ref{DissipLindblad}) can be rewritten as
\begin{equation}
\sum_{i=1}^N \sigma_-^{(i)} \,\boldsymbol{\cdot}\, \sigma_+^{(i)}  - \frac{J_- \boldsymbol{\cdot}\, J_+}{N}
-\frac{1}{2} \left\{ \frac{N \mathbb{1}}{2} + J_z - \frac{J_+J_-}{N}, \boldsymbol{\cdot} \right\}.
\end{equation}
Hence, it cannot be expressed as a function of collective spin operators only. However, it affects each atom identically. Therefore, the Liouvillian superoperator does not distinguish between atoms and commutes with $P_\pi$ for all permutations $\pi$, i.e.\
\begin{equation}
P_\pi \mathcal{L}[\rho] P^{\dagger}_\pi = \mathcal{L}[P_\pi \rho P^{\dagger}_\pi] \quad \forall \, \rho, \forall\,\pi.
\end{equation}
It couples symmetric states with the broader class of permutation-invariant states. These states, denoted hereafter by $\rho_\PI$, are states satisfying~\cite{Nov13}
\begin{equation}\label{pidm}
\rho_\PI = P_\pi \rho_\PI P_\pi^{\dagger} \hspace{1cm} \forall \, \pi.
\end{equation}
They act on a subspace whose dimension grows only as $N^2$~\cite{Cha08, Bar10}. 

\section{Master equation in the coupled spin basis}

From now on, we will denote the internal density matrix $\rho_A^\mathrm{in}$ by $\rho$. In this section, we express the master equation (\ref{meqsummary}) in the coupled spin basis, which is particularly suited for the study of permutation-invariant states.

\subsection{Coupled spin basis}
The Hilbert space $\mathcal{H}$ of an ensemble of $N$ two-level systems admits the Wedderburn decomposition \cite{Dar05, Bus06, Bac06, Nov13}
\begin{equation}\label{Hdecomp}
\mathcal{H} = (\mathbb{C}^2)^{\otimes N} \simeq \bigoplus_{J = J_\mathrm{min}}^{N/2} \mathcal{H}_J \otimes \mathcal{K}_J,
\end{equation}
with $J_\mathrm{min} = 0$ for even $N$ and $1/2$ for odd $N$. In Eq.~(\ref{Hdecomp}), $\mathcal{H}_J$ is the \textit{representation} space of dimension $2J+1$ on which the irreducible representations of the group SU(2) act.
The number of degenerate irreducible representations with total angular momentum $J$ is equal to the dimension
\begin{equation}
d_N^J = \frac{(2J+1)N!}{(N/2 - J)!(N/2+J+1)!}
\end{equation}
of the \textit{multiplicity} space $\mathcal{K}_J$ on which the irreducible representations of the symmetric group $S_{N}$ act.
The total Hilbert space $\mathcal{H}$ is therefore spanned by the states $\ket{J,M,\kJ}  \equiv \ket{J,M} \otimes \ket{\kJ}$, where $\ket{J, M}$ are basis states of the subspaces $\mathcal{H}_J$ ($J = J_\mathrm{min}, \dotsc, N/2$; $M = -J, \dotsc, J$) and $\ket{\kJ}$ are basis states of the subspaces $\mathcal{K}_J$ ($\kJ = 1, \dotsc, d_N^J$). The $2^N$ basis states $\left\{\ket{J,M,\kJ}\right\}$ form the coupled spin basis \cite{Bar10}. By construction, $\ket{J,M,\kJ}$ are spin-$J$ states satisfying
\begin{equation}\label{eigensystem1}
\begin{aligned}
& \mathbf{J}^2 \ket{J,M,\kJ}= J(J+1) \ket{J,M, \kJ}, \\
& J_z \ket{J,M,\kJ}= M \ket{J,M, \kJ}, \\ 
& J_\pm \ket{J,M,\kJ} = \sqrt{(J\mp M)(J\pm M + 1)}\ket{J,M\pm 1, \kJ}
\end{aligned}
\end{equation}
with $\mathbf{J}^2 = J_x^2 + J_y^2 + J_z^2$ and $J_m =
\frac{1}{2}\sum_{j = 1}^N \sigma_m^{(j)}$ ($m = x,y,z$). The
degenerate structure of the decomposition (\ref{Hdecomp}) is depicted
in the Bratteli diagram shown in Fig.~\ref{bratteli}. There are
$d_N^J$ ways to obtain an angular momentum $J$ from the coupling of
$N$ spins $1/2$, each way being associated with a path in the Bratteli
diagram. The quantum number $\kJ = 1,\dotsc, d_N^J$ enables one to distinguish these different paths.

\begin{figure}
\begin{center}
\includegraphics[width=0.475\textwidth]{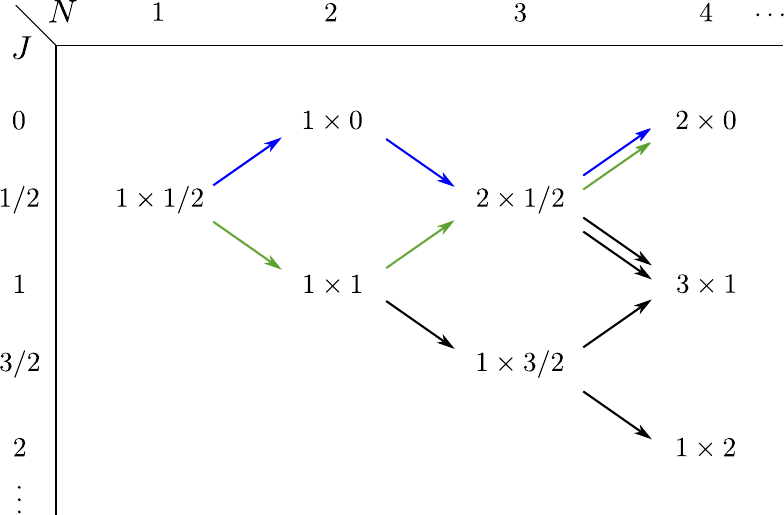}
\end{center}
\caption{Bratteli diagram representing the degeneracy structure $d_n^J \times J$ of $N$ coupled spins $1/2$. The two colored paths leading to the same angular momentum $J = 0$ correspond to two different values of the quantum number $k_{J=0}$.}\label{bratteli}
\end{figure}

\subsection{Permutation-invariant states in the coupled spin basis}

According to the Schur-Weyl duality \cite{Arn16,Chr06}, any permutation $P_\pi$ acts only on the multiplicity subspaces $\mathcal{K}_J$ [see decomposition (\ref{Hdecomp})] and thus has the form
\begin{equation}
P_\pi = \bigoplus_{J = J_\mathrm{min}}^{N/2} \mathbb{1}_{\mathcal{H}_J} \otimes P_J(\pi),
\end{equation}
where $\mathbb{1}_{\mathcal{H}_J}$ is the identity operator on $\mathcal{H}_J$ and $P_J(\pi)$ is an irreducible representation of $S_N$ of dimension $d_N^J$. A permutation-invariant mixed state $\rho_\PI$ commutes with $P_\pi$ for any permutation $\pi$ [see Eq.~(\ref{pidm})] and thus admits in the coupled spin basis a block-diagonal form~\cite{Nov13},
\begin{equation}\label{PI}
\begin{aligned}
\rho_\PI &=  \bigoplus_{J = J_\mathrm{min}}^{N/2} \rho_J \otimes \mathbb{1}_{\mathcal{K}_J}, \\
\end{aligned}
\end{equation}
where $\mathbb{1}_{\mathcal{K}_J}$ is the identity operator on $\mathcal{K}_J$ and
\begin{equation}
\begin{aligned}
\rho_J &= \sum_{M,M' = -J}^J  \rho_{J}^{M,M'} \, \ket{J, M } \bra{J, M'},
\end{aligned}
\end{equation}
with the density matrix elements
\begin{equation}\label{dmelem}
\rho_{J}^{M,M'} \equiv  \langle J,M,\kJ | \rho_\PI| J,M',\kJ \rangle \quad \forall \, \kJ.
\end{equation}
The block-diagonal form illustrated in Fig.~\ref{blockdiagonal} shows that $\rho_\PI$ does not contain any coherences between blocks of different angular momenta $J$. For each $J$, there are $d_N^J$ identical subblocks. Since the matrix elements in these blocks do not depend on the label $\kJ = 1, \dotsc, d_N^J$, the number of real parameters needed to specify a permutation-invariant state $\rho_\PI$ corresponds to the sum of the density matrix elements of all $\rho_J$
\begin{equation}\label{dparameters}
\sum_{J = J_\mathrm{min}}^{N/2} (2J+1)^2 = \frac{1}{6}(N+1)(N+2)(N+3)=\mathcal{O}(N^3).
\end{equation}
This number is much smaller than the total number ($2^{2N}-1$) of a general $N$-atom density operator and highlights the
convenience of this representation. 
\begin{figure}
\begin{center}
\includegraphics[width=0.475\textwidth]{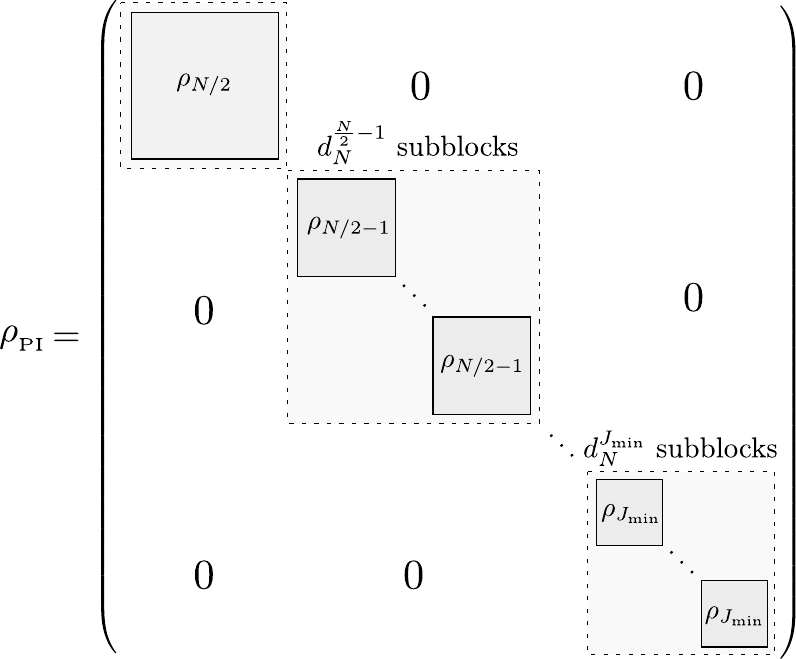}
\end{center}
\caption{Block-diagonal form of the density matrix representing a
  permutation-invariant state in the coupled spin basis. To each value of the angular momentum $J$ corresponds $d_N^J$ subblocks of dimension $(2J+1)\times(2J+1)$. The block with $J = N/2$ is unique and is spanned by symmetric states.}\label{blockdiagonal}
\end{figure}

Note that a symmetric mixed state $\rho_S$ is just a particular case of permutation-invariant state (\ref{PI}) with $ \rho_{J}^{M,M'} = 0$ for $J \neq N/2$. The non-vanishing matrix elements of symmetric states all lie in the upper block $\rho_{N/2}$ of dimension $N+1$ depicted in Fig.~\ref{blockdiagonal}. 

\subsection{Projection of the master equation in the coupled spin basis}
\label{mepiDipDip}

Let us write the master equation (\ref{meqsummary}) in terms of matrix
elements of the density operator in the coupled spin basis. By
inserting Eq.~(\ref{PI}) into Eq.~(\ref{meqsummary}) and upon using
Eqs.~\eqref{Hamildipdip2}
and (\ref{eigensystem1}), we get
\begin{widetext}
\begin{equation}\label{conspart}
\left[H_\mathrm{dd}, \rho(t)\right]
= \hbar \Delta_{\mathrm{dd}}   \sum_{J = J_\mathrm{min}}^{N/2} \sum_{M,M' = -J}^{J} \rho_{J}^{M,M'}(t) \big(M'^2-M^2\big)\, \ket{J, M} \bra{J, M'} \otimes \mathbb{1}_{\mathcal{K}_J},
\end{equation}

\begin{equation}\label{Dissip4}
\begin{aligned}
\mathcal{D}\left[\rho(t)\right] = \sum_{J = J_\mathrm{min}}^{N/2} \sum_{M,M' = -J}^{J} \rho_{J}^{M,M'}(t)   \Bigg[ & \, \gamma \, A_-^{J,M} A_-^{J,M'}\, \ket{J, M-1} \bra{J, M'-1}\otimes \mathbb{1}_{\mathcal{K}_J} \Bigg.  \\ 
&\,  - \frac{1}{2}\left( \gamma\big(A_-^{J,M}\big)^{2}+\gamma\big(A_-^{J,M'}\big)^{2}+ \gb (N+M+M') \right) \ket{J, M} \bra{J, M'} \otimes \mathbb{1}_{\mathcal{K}_J}\\
&\,  +  \gb \sum_{j=1}^N\sigma_-^{(j)} \Big[\ket{J, M} \bra{J, M'}\otimes \mathbb{1}_{\mathcal{K}_J}\Big] \sigma_+^{(j)} \,\Bigg]. 
\end{aligned}
\end{equation}

The last term in Eq.~(\ref{Dissip4}) cannot be written solely in terms of collective spin operators but affects each atom identically. It has been evaluated in~\cite{Cha08} and reads
\begin{equation}
\begin{aligned}\label{decohterm}
\sum_{j=1}^N\sigma_-^{(j)} \Big[\ket{J, M} \bra{J, M'}\otimes \mathbb{1}_{\mathcal{K}_J}\Big] \,\sigma_+^{(j)} = & \, \frac{1}{2J}A_-^{J,M}A_-^{J,M'}\left( 1 + \frac{\alpha_N^{J+1}(2J+1)}{d_{N}^J (J+1)} \right) \,\ket{J, M-1} \bra{J, M'-1}\otimes \mathbb{1}_{\mathcal{K}_J} \\
 & \, +  \frac{B_-^{J,M}B_-^{J,M'} \alpha_N^J}{2 J d_N^{J-1}} \,\ket{J-1, M-1} \bra{J-1, M'-1}\otimes \mathbb{1}_{\mathcal{K}_{J-1}} \\ 
 & \, +  \frac{D_-^{J,M}D_-^{J,M'} \alpha_N^{J+1}}{2 (J +1) d_N^{J+1}}  \,\ket{J+1, M-1} \bra{J+1, M'-1}\otimes \mathbb{1}_{\mathcal{K}_{J+1}},
\end{aligned}
\end{equation}
\end{widetext}
where
\begin{align}
& A_\pm^{J,M} = \sqrt{(J\mp M)(J\pm M+1)},\label{defCoeff1}\\
& B_-^{J,M} = -\sqrt{(J+M)(J+M-1)} ,
\label{defCoeff2}\\
& D_-^{J,M} = \sqrt{(J-M+1)(J-M+2)} , \label{defCoeff3}
\end{align}
and 
\begin{equation}\label{defCoeff4}
\alpha_{N}^{J} = \sum_{J' = J}^{N/2} d_{N}^{J'}.
\end{equation}

Equation~(\ref{conspart}) shows that dipole-dipole interactions do not couple blocks of different angular momentum $J$, but couple non-diagonal ($M \neq M'$) density matrix elements within a block. The term~(\ref{decohterm}) describes transitions giving rise to energy loss due to photon emissions, since it reduces the value of the quantum numbers $M$ and $M'$ by one unit. Such transitions from a block of angular momentum $J$ occur either within a same block or in neighboring blocks of angular momentum $J \pm 1$. The former preserve the symmetry of the state while the latter modify it.

By injecting Eqs.~(\ref{conspart}) and (\ref{Dissip4}) into the master equation~(\ref{meqsummary}) and projecting onto the  states $\ket{J, M, \kJ}$, we get a system of  $\mathcal{O}(N^3)$ [see Eq.~(\ref{dparameters})] differential equations for the density matrix elements $\rho_{J}^{MM'}(t)$ that reads
\begin{widetext}
\begin{equation}\label{diffeq}
\frac{d\rho_{J}^{M,M'}(t)}{dt} = - \G{J^{M,M'}}^{(1)}  \,\rho_{J}^{M,M'}(t) 
+ \G{J^{M+1,M'+1}}^{(2)} \, \rho_{J}^{M+1,M'+1}(t) 
+ \G{{J+1}^{M+1,M'+1}}^{(3)} \,\rho_{J+1}^{M+1,M'+1}(t)
 + \G{{J-1}^{M+1,M'+1}}^{(4)} \,\rho_{J-1}^{M+1,M'+1}(t),
\end{equation}
with
\begin{equation}
\begin{aligned}\label{diffeqb}
& \G{J^{M,M'}}^{(1)} =  i \Delta_{\mathrm{dd}} 
(M'^2-M^2) +  \frac{\gamma}{2} \Big[\big(A_-^{J,M}\big)^{2}+\big(A_-^{J,M'}\big)^{2}\Big] +   \frac{\gb}{2}(N+M+M'),  \\
& \G{J^{M+1,M'+1}}^{(2)} = A_{+}^{J,M} A_{+}^{J,M'} \left[ \gamma \,  +  \frac{ \gb}{2J}\left( 1 + \frac{\alpha_N^{J+1}(2J+1)}{d_{N}^J (J+1)} \right) \right],\\
& \G{{J+1}^{M+1,M'+1}}^{(3)} = \gb \, \frac{B_-^{J+1,M+1}B_-^{J+1,M'+1} \alpha_N^{J+1}}{2 (J+1) d_N^{J}},\\
& \G{{J-1}^{M+1,M'+1}}^{(4)} =  \gb \, \frac{D_-^{J-1,M+1}D_-^{J-1,M'+1} \alpha_N^{J}}{2 J  d_N^{J}}.
\end{aligned} 
\end{equation}
\end{widetext}
Equations~(\ref{diffeqb}) for the transition rates show that the
populations $\rho_{J}^{M,M}$ are decoupled from the coherences
$\rho_{J}^{M,M'}$ ($M \neq M'$). More specifically, coherences specified by $M,M'$ 
are only coupled to coherences with the same difference $M-M'$, and populations
$\rho_{J}^{M,M}$ can only feed populations
$\rho_{J'}^{M',M'}$ with $M' \leqslant M$ and $J' \geqslant
(J-M)/2$. This can be seen from Eqs.~(\ref{diffeq}) and (\ref{diffeqb}) and Fig.~\ref{rates}, which shows
the couplings between the populations together with the corresponding rates. Indeed, in Eq.~(\ref{diffeq}), the derivative of $\rho_{J'}^{M',M'}$ depends only on density matrix elements with equal or larger quantum numbers $M$, which implies that starting from a state with a given $M$, only states with $M' \leqslant M$ can be populated during the dynamics. As for the quantum number $J'$, it can decrease or increase through the channels with rates $\Gamma^{(3)}$ and $\Gamma^{(4)}$ (see Fig.~\ref{rates}). However, it cannot decrease indefinitely. Consider the initial state $\ket{J,M}$: All states $\ket{J-Q, M-Q}$ with positive half-integer $Q$ can be populated provided that $J-Q\geqslant J_\mathrm{min}$ and $J-Q\geqslant M-Q\geqslant -(J-Q)$. The first inequality of the latter expression is always satisfied since $M \leqslant J$, but the second inequality imposes $Q \leqslant (J+M)/2$. This in turn implies the minimal value $(J-M)/2$ for the quantum number $J' \equiv J-Q$.

\begin{figure}
\begin{center}
\includegraphics[width=0.475\textwidth]{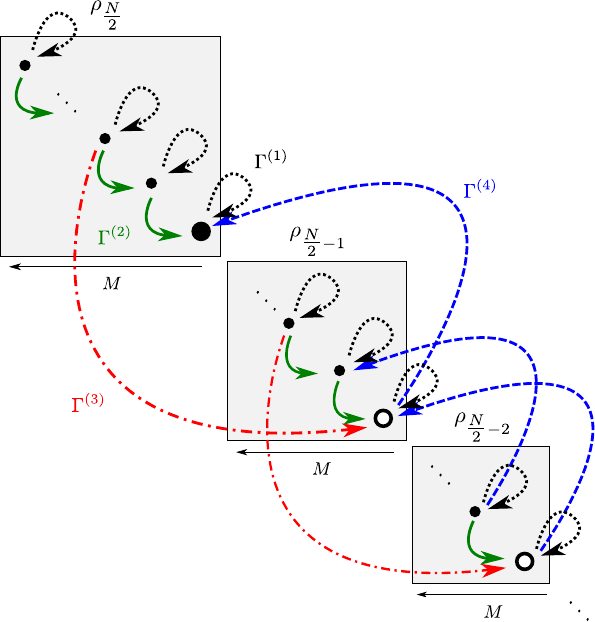}
\end{center}
\caption{Couplings between the populations $\rho_{J}^{M,M}$ (small closed circles) lying in the different blocks $\rho_J$ of angular momentum $J=N/2,N/2-1,N/2-2,\dotsc$ (gray squares), as described by Eq.~(\ref{diffeq}). The large closed and open circles at the bottom right of each block are the populations $\rho_{J}^{-J,-J}$ corresponding to subradiant states when $\gb = 0$ (see Sec.~\ref{secSubradiance}). The arrows show the different couplings between populations characterized by the rates $\Gamma^{(r)}$ (with $r = 1, 2,3,4$ and where the subscripts have been dropped for the sake of clarity). The rates $\Gamma^{(1)}$ and $\Gamma^{(2)}$ are related to transitions within a block while the rates $\Gamma^{(3)}$ and $\Gamma^{(4)}$ (proportional to $\gb$) are related to transitions between different blocks $\rho_J$. This diagram shows that starting with the initial condition $\rho_{J}^{M,M}(0) = 1$, only populations $\rho_{J'}^{M',M'}$ with $M' \leqslant M$ and $J' \geqslant  (J-M)/2$ can be non-zero during the radiative decay. When $\Delta\gamma > 0$, $\Gamma^{(3)}$ and $\Gamma^{(4)}$ are non-zero and the state $|N/2,-N/2\rangle$ (large closed circles) is the only stationary state for any initial conditions.}
\label{rates}
\end{figure}

\section{Solutions of the master equation}

The solutions of the master equation for indinstinguishable atoms only
involve the rates $\gamma$, $\gb = \gamma_0 - \gamma$, and
$\Delta_\mathrm{dd}$. In this section, we compute numerical solutions
up to $30$ atoms for different values of these rates. The solutions
allow us to study the modifications of super- and subradiance arising
from a proper quantum treatment of the atomic motion. In addition, we obtain
analytical results for large $N$ by applying a mean-field
approximation. 

In order to quantify the modifications in the release of energy from the atomic system, we calculate the normalized radiated energy rate~\cite{Gro82}
\begin{equation}\label{RER}
I(t) = -\frac{d}{dt}\langle J_z \rangle (t).
\end{equation}
For permutation-invariant states (\ref{PI}), Eq.~(\ref{RER}) can be expressed in terms of the populations $\rho_{J}^{M,M}$ as
\begin{equation}\label{RER2}
I(t) = -\sum_{J = J_\mathrm{min}}^{N/2} d_{N}^{J} \sum_{M = -J}^{J} M \frac{d \rho_{J}^{M,M}(t)}{dt}.
\end{equation}
By inserting Eq.~(\ref{diffeq}) into (\ref{RER2}) and after algebraic manipulations, we get
\begin{equation}\label{RER3}
I(t) = \sum_{J = J_\mathrm{min}}^{N/2} d_{N}^{J} \sum_{M = -J}^{J} c_{J}^{M} \, \rho_{J}^{M,M}(t)
\end{equation}
with positive coefficients $c_{J}^{M}$ given by
\begin{multline}\label{RER3coeff}
c_{J}^{M} = \big(J+M\big)\big(J-M+1\big) \, \gamma + \left(M + \frac{N}{2}\right) \gb.
\end{multline}

\subsection{Superradiance}

The superradiance phenomenon is usually observed when the atoms are initially in a symmetric internal state $\ket{N/2, M}$. In this section, we choose for initial state the symmetric state $\ket{N/2,N/2}\equiv |e,e,\dotsc,e\rangle$. This choice allows us to study the superradiant radiative cascade starting from the highest energy level.

\subsubsection{Analytical results for $2$ atoms}

For two atoms, a simple analytical solution of the master equation can be obtained and is given in Appendix C. 
For the initial condition $\rho(0) = \ket{1,1}\bra{1,1}\equiv \ket{e,e}\bra{e,e}$, the radiated energy rate (\ref{RER3}) resulting from the solution (\ref{fullsolutionN2}) given in the Appendix reads
\begin{equation}\label{eq:it}
\begin{aligned}
I(t) = \frac{e^{-2(\gof + \gb)t}} {(2 \gof + \gb) \gb} \bigg[ & \, (2 \gof + \gb)^2 \gb  + \gb^2 (2 \gof + \gb) \\ 
& \,+ (2 \gof + \gb)^3 \Big(e^{\gb t} - 1\Big) \\
& \, + \gb^3 \Big(e^{(2 \gof + \gb) t} - 1\Big) \bigg].
\end{aligned}
\end{equation}
In the absence of quantum fluctuations of the atomic positions and for
colocated atoms \cite{Dam16}, i.e.\ when $\gb=0$ ($\gamma =
\gamma_0$), pure superradiance occurs during which all symmetric Dicke states $\ket{1,1},\ket{1,0}$ and $\ket{1,-1}$ are gradually populated. In this case, Eq.~(\ref{eq:it}) reduces to the superradiant radiated energy rate
\begin{equation}\label{eq:itsup}
I(t) = 2 \gamma_0\,  e^{-2 \gamma_0 t} (1 + 2 \gamma_0 t).
\end{equation}
When $\gb > 0$, the singlet state $\ket{0,0}$ is coupled to the symmetric Dicke states and the radiated energy rate is reduced at small times as can be seen in Fig.~\ref{Intensity}.

When $\gamma=0$, $\gb=\gamma_0$ and Eq.~(\ref{eq:it}) reduces to the pure exponential decay characteristic of individual spontaneous emission 
\begin{equation}
I(t) = 2 \gamma_0\, e^{- \gamma_0 t}. \label{eq:itspo}
\end{equation}

\begin{figure}
\begin{center}
\includegraphics[width=0.47\textwidth]{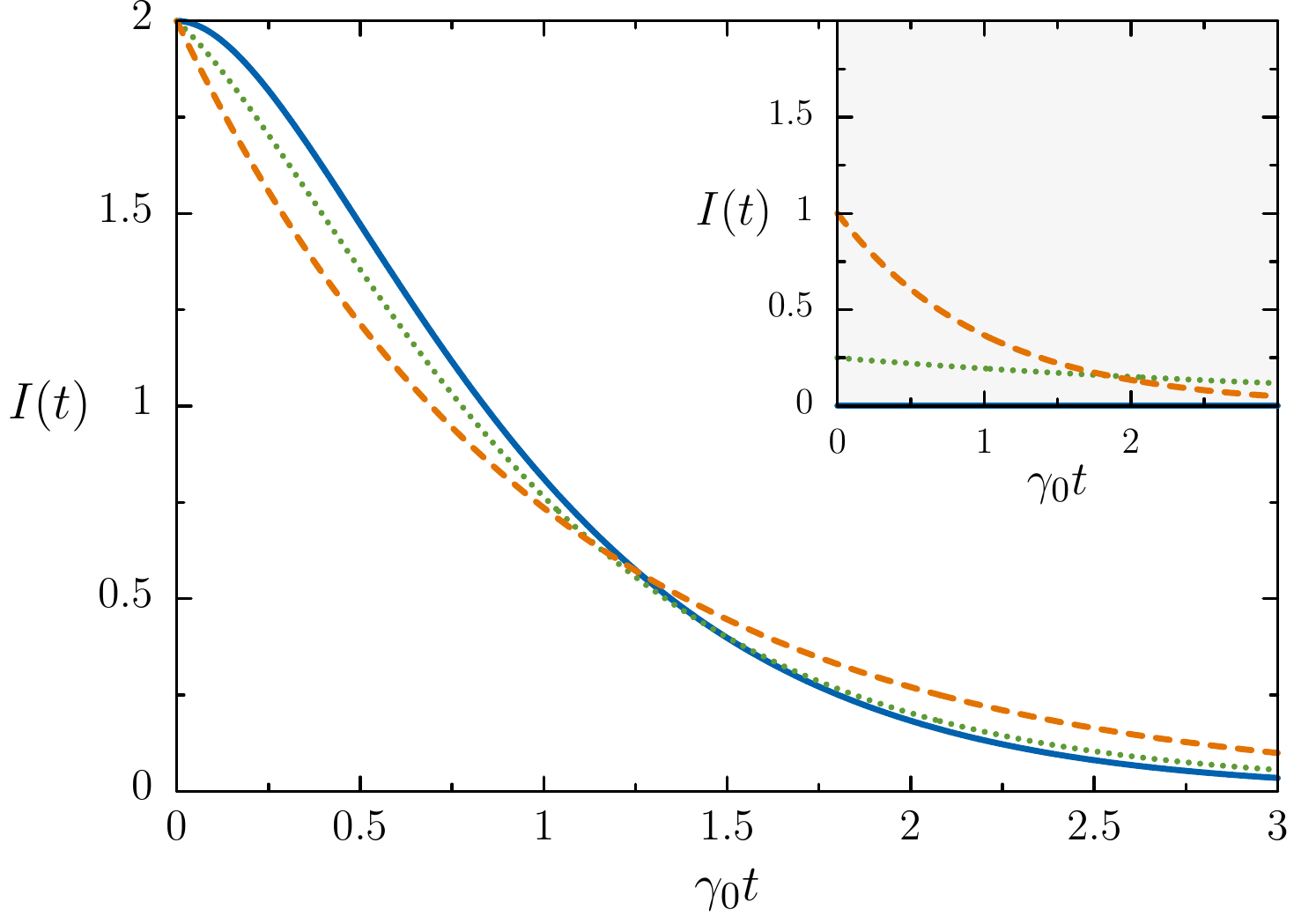}
\end{center}
\caption{Radiated energy rate for two atoms in the initial state $\ket{ee}$ as a function of time for $\gof=\gamma_0$ (blue solid curve), $\gof=3\gamma_0/4$ (green dotted
  curve), $\gof=0$ (orange dashed curve). The blue solid curve
  corresponds to pure superradiance [Eq.~(\ref{eq:itsup})], the orange
  dashed curve to independent spontaneous emission
  [Eq.~(\ref{eq:itspo})] and the green dotted curve to altered
  superradiance [Eq.~(\ref{eq:it})]. The inset shows the radiated energy rate for the initial state $\ket{0,0}$ and the same parameters.} \label{Intensity} 
\end{figure}

\subsubsection{Numerical results for $N > 2$}

In this section, we solve numerically the set of coupled equations (\ref{diffeq}) for the initial condition $\rho(0) = \rho_{N/2}^{N/2,N/2} = |e,e,\dotsc,e\rangle\langle e,e,\dotsc,e|$ and for different values of $\gb$. We then compute the radiated energy rate (\ref{RER3}). Figure~\ref{RItime} shows $I(t)$ as a function of time from $3$ to $30$ atoms, where each panel corresponds to a different value of $\gb$. For $\gb=0$, pure superradiance occurs (first panel). For $\gb=\gamma_0$, the radiated energy rate decreases according to $I(t) = N \gamma_0\, e^{-\gamma_0 t}$, as is typical of individual spontaneous emission (last panel). The middle panels show the crossover between these two regimes. Figure \ref{RItime30} is a three-dimensional plot of $I(t)$ showing the crossover for $N=30$. 

\begin{figure*}[hbt]
\begin{center}
\includegraphics[width=0.95\textwidth]{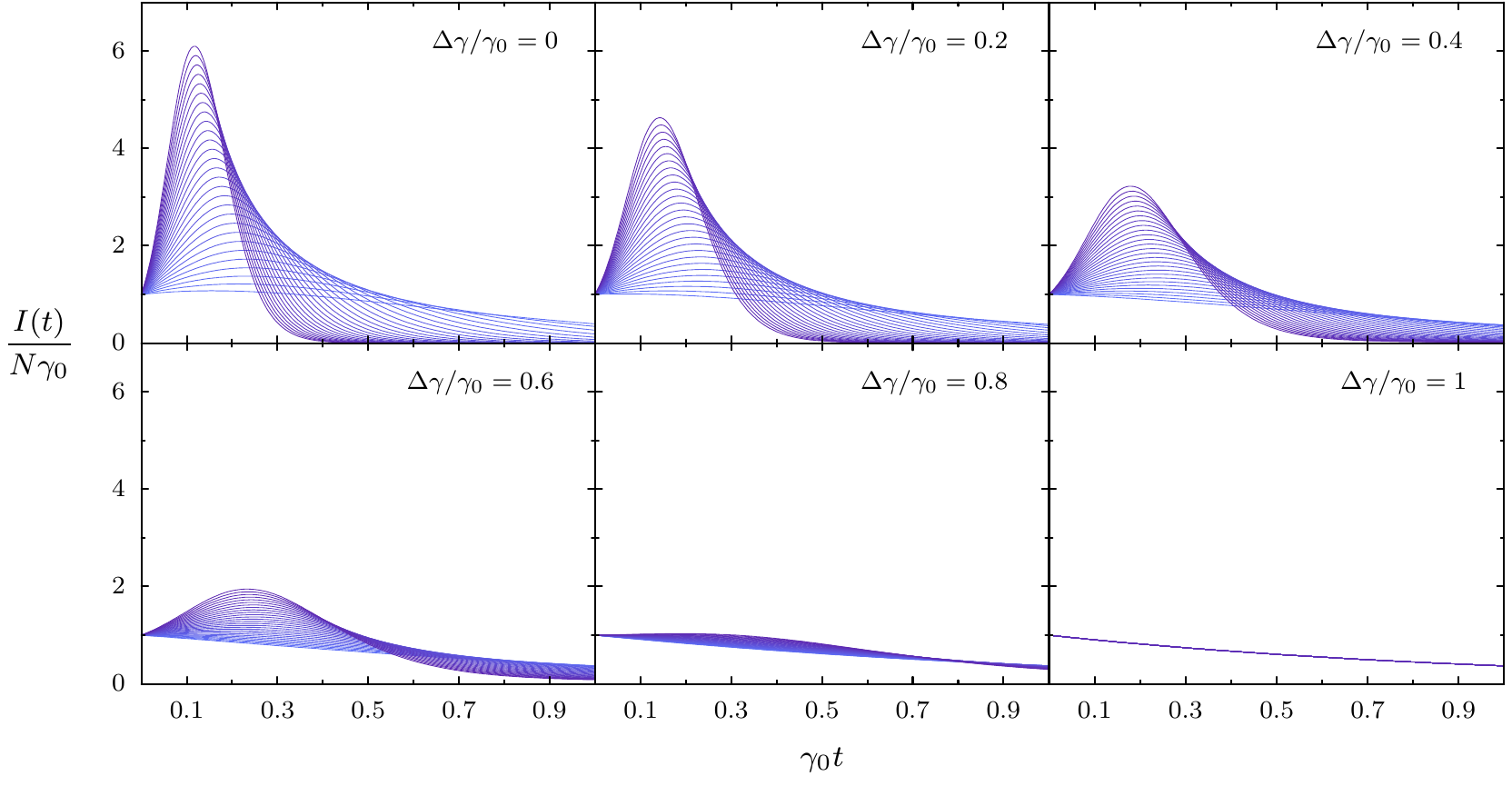}
\end{center}
\caption{Radiated energy rate as a function of time for different values of $\gb = \gamma_0 - \gamma$ (corresponding to the different panels) and different number of atoms ($N=3,\dotsc,30$ from bottom to top on the left of each panel). The case $\gb = 0$ (pure superradiance) is illustrated in the first panel while the case $\gb = \gamma_0$ corresponding to independent spontaneous emissions is illustrated in the last panel.}\label{RItime}
\end{figure*}

\begin{figure}[hbt]
\begin{center}
\includegraphics[width=0.45\textwidth]{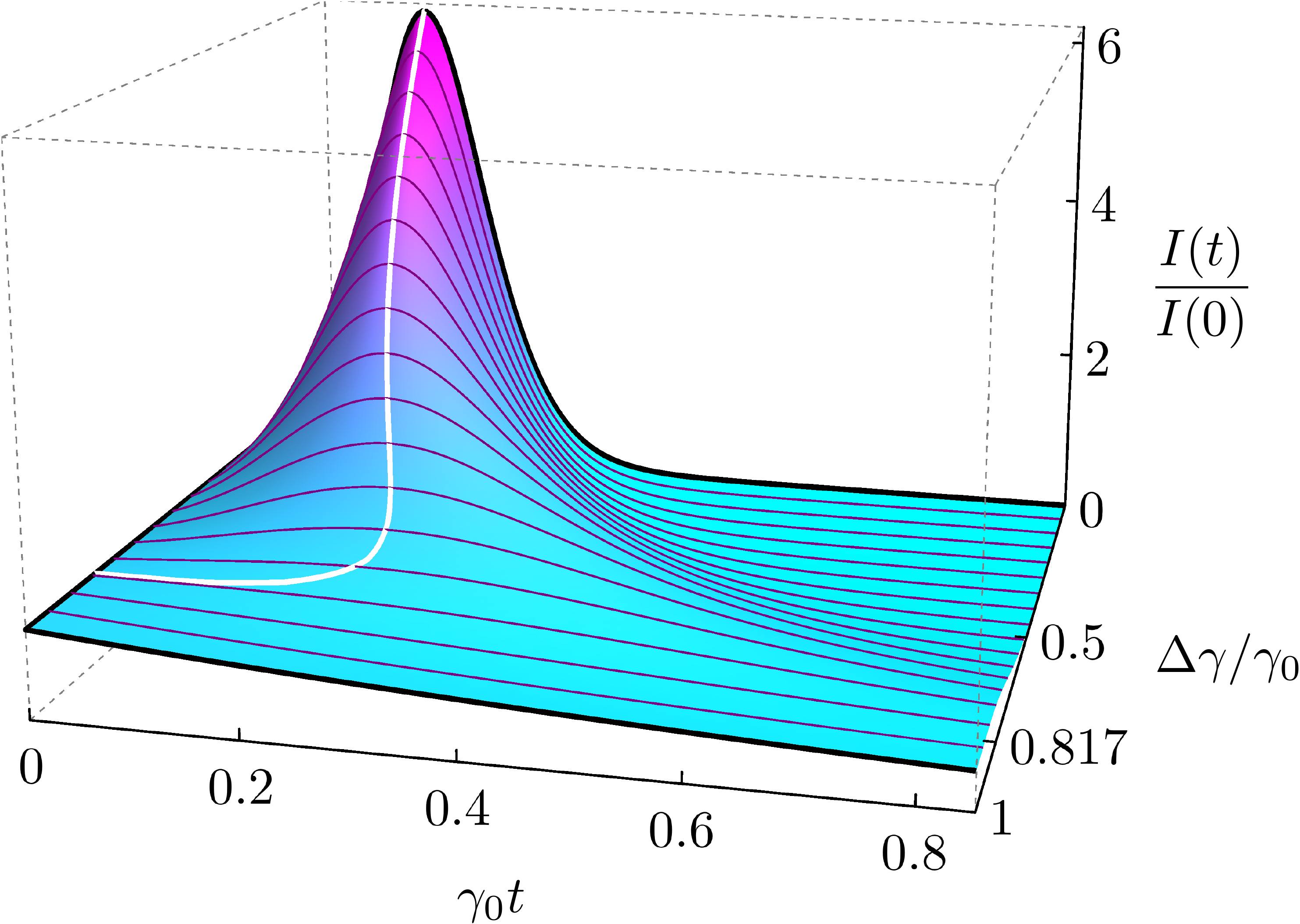}
\end{center}
\caption{Radiated energy rate as a function of time and $\gb$ for $N = 30$ atoms. The superradiant pulse progressively disappears as $\gb$ increases from $0$ to $\gb^*=0.817\,\gamma_0$. For $\gb = \gamma_0$, $I(t)$ decays exponentially at a rate $\gamma_0$. The white line indicates the location of the maximum of the pulse.}\label{RItime30}
\end{figure}

In order to characterize the superradiant pulse in the intermediate regime, we compute its relative height $A_I$ and the time $t_I$ at which its maximum occurs. These quantities are defined as 
\begin{equation}
A_I = \underset{t}{\mathrm{max}}[I(t)] - I(0) = I(t_I)-N \gamma_0,
\end{equation}
Our results, displayed in Fig.~\ref{HtHgamma2}, show that the height $A_I$ of the pulse is maximal for $\gb=0$, decreases monotonically with $\gb$ and vanishes for $\gb\geqslant\gb^{*}$, where the critical value $\gb^{*}$ depends only on the number of atoms. The decrease as a function of $\gb$ is more and more linear as $N$ increases. We explain this behavior in the next subsection on the basis of a mean-field approximation. For sufficiently large $N$, the time $t_I$ at which the maximum occurs increases as a function of $\gb$ before dropping to zero at $\gb=\gb^*$. The critical value $\gb^{*}$ increases as the number of atoms increases, as shown in Fig.~\ref{deltagammacritic}, and tends to $\gamma_0$ for $N \to \infty$. This means that for a fixed value of $\gb$, superradiance can always be observed for a sufficiently large number of atoms. Indeed, the derivative of the radiated energy rate (\ref{RER3}) reads
\begin{equation}\label{dRER3}
\frac{dI(t)}{dt} = \sum_{J = J_\mathrm{min}}^{N/2} d_{N}^{J} \sum_{M = -J}^{J} \tilde{c}_{J}^{M} \, \rho_{J}^{M,M}(t)
\end{equation}
with
\begin{multline}\label{dRER3coeff}
\tilde{c}_{J}^{M} = 2 \big(J+M\big)\big(J-M+1\big) \big[(M-1)\gamma - \gb\big] \gamma  \\ - \left(M + \frac{N}{2}\right) \gb^2.
\end{multline}
If the derivative of the radiated energy rate at initial time is strictly positive, a non-zero superradiant pulse height ($A_{I} > 0$) is always obtained. For an initial fully excited state, this sufficient condition in terms of the critical value $\Delta\gamma^{*}(N)$ reads
\begin{equation}\label{DeltaGammaStar}
\gb < \gamma_0\left(1 - \frac{1}{\sqrt{N-1}}\right) \equiv \Delta\gamma^{*}(N).
\end{equation}
As shown in Fig.~\ref{deltagammacritic}, our numerical results are in excellent agreement with Eq.~(\ref{DeltaGammaStar}).

\begin{figure}[hbt]
\begin{center}
\includegraphics[width=0.475\textwidth]{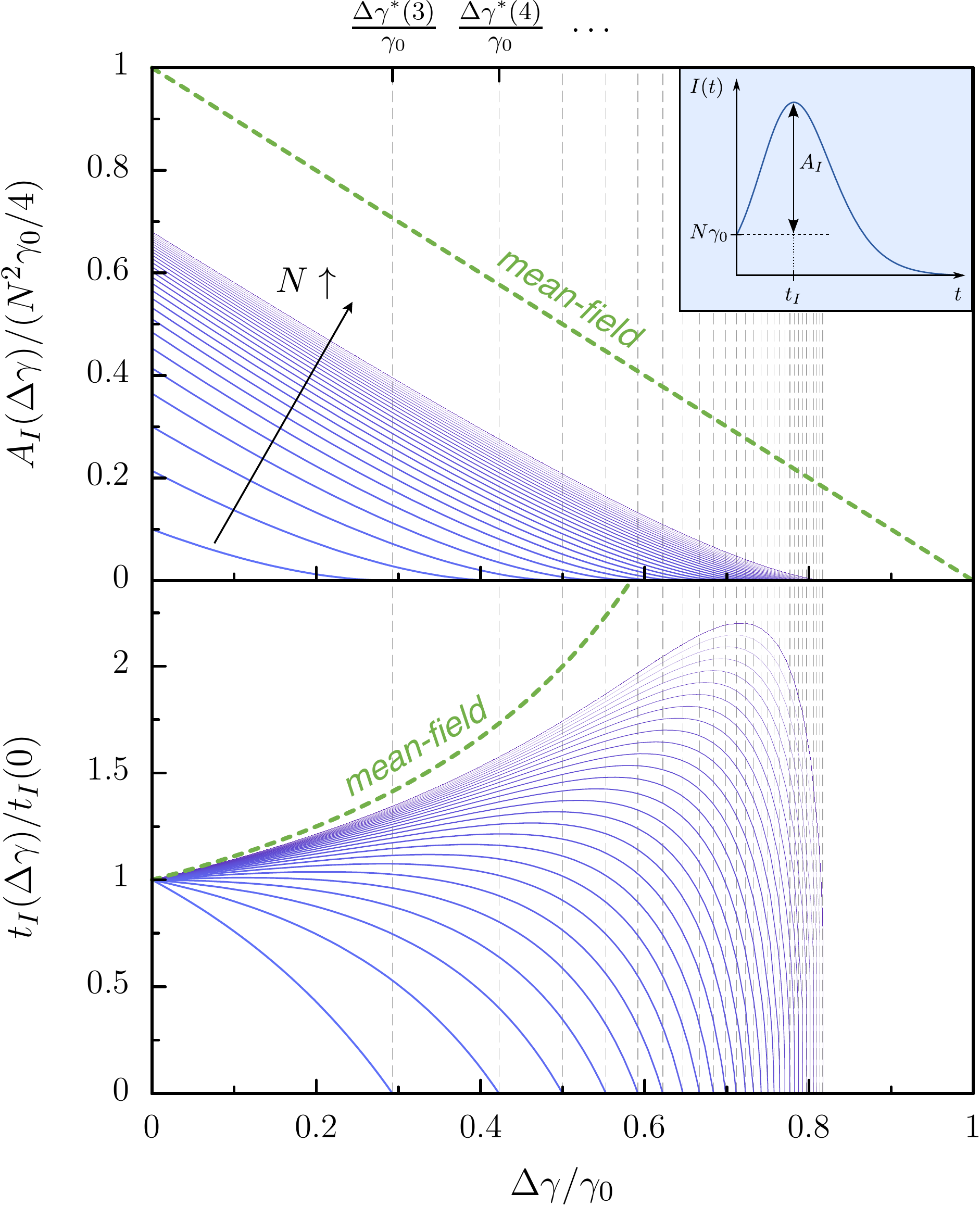}
\end{center}
\caption{Shown on top is the height $A_I$ of the superradiant pulse rescaled by $N^2\gamma_0/4$ as a function of $\gb = \gamma_0 - \gamma$ for $N = 3, \dotsc, 30$ (from left to right). The bottom shows the delay time $t_I(\gb)$ after which the radiated intensity attains a maximum, rescaled by $t_I(0)$. The dashed green curves correspond to the mean-field results [Eqs.~(\ref{Hmf}) and (\ref{thsupmf})].}\label{HtHgamma2}
\end{figure}

\begin{figure}[hbt]
\begin{center}
\includegraphics[width=0.475\textwidth]{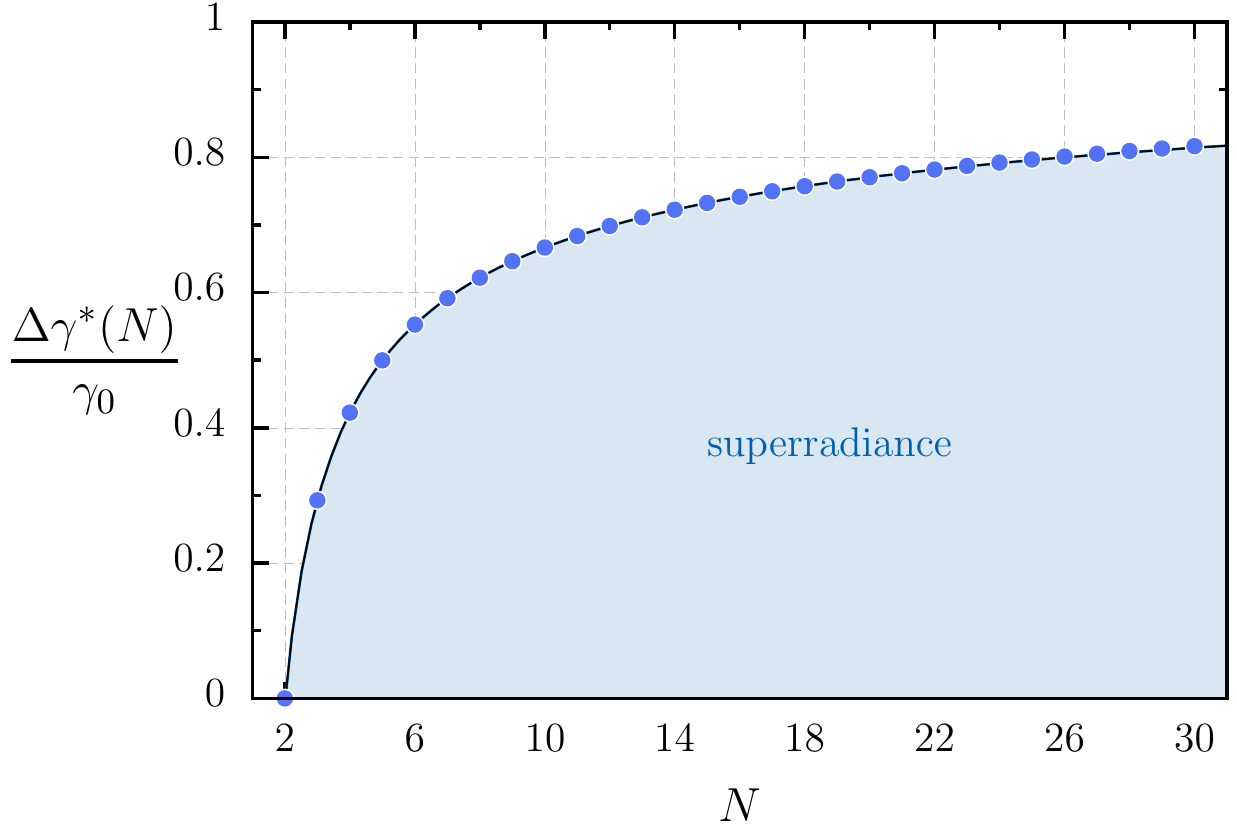}
\end{center}
\caption{Critical value $\gb^*$ at which the superradiant pulse height $A_I$ drops to zero and remains zero for $\gb>\gb^*$, plotted as a function of the number of atoms. The circles show the values extracted from numerical computations. The solid line shows the analytical prediction given by Eq.~(\ref{DeltaGammaStar}).}\label{deltagammacritic}
\end{figure}

\subsubsection{Mean field approach}

When the number of atoms is large, a mean-field approximation can be made~\cite{Bre06, Spo80} that assumes an internal state of the form
\begin{equation}
\label{rhoMeanField}
\rho(t) \approx \sigma(t) \otimes \cdots \otimes \sigma(t).
\end{equation}
In the mean-field approximation, all atoms lie in the same quantum state $\sigma(t)$. The global state $\rho(t)$
is permutation invariant at any time $t$ but not necessarily
symmetric. However, when $\sigma(t)$ is a pure state, $\rho(t)$ is
symmetric and has only components in the block of maximal angular
momentum $J=N/2$. When $\Delta \gamma=0$, the superradiant cascade takes place only in the block $J=N/2$ and $\sigma(t)$ is usually chosen pure~\cite{Bre06}. When $\Delta \gamma\ne 0$, the ratio between the transition rates within the block $J = N/2$ and the neighboring block $J = N/2 - 1$ for the emission of the $s$-th photon with $s\gg 1$ is much larger than $1$, i.e.\
\begin{equation}
\frac{\G{{N/2}^{N/2 - s + 1,N/2 - s + 1}}^{(2)} }{\G{{N/2}^{N/2 - s+1,N/2 - s+1}}^{(3)}} \approx s\frac{\gamma}{\Delta\gamma}\gg 1.
\end{equation}
Hence, during the main part of the radiative cascade (when $s$ is large), the dynamics takes place essentially in the block $J=N/2$, so that we also choose $\sigma(t)$ to be a pure state.

By inserting Eq.~(\ref{rhoMeanField}) into the master equation (\ref{meqsummary}) and by tracing over $N-1$ atoms, we get the following non-linear equation for $\sigma(t)$
\begin{equation}\label{mfeq}
\frac{d\sigma(t)}{dt} = -\frac{i}{\hbar} \Big[ V_H\left[\sigma(t) \right] + V_D\left[ \sigma(t) \right] , \sigma(t) \Big] + \mathcal{D}_\mathrm{se}\left[ \sigma(t) \right].
\end{equation}
In Eq.~(\ref{mfeq}), $V_H$ is the non-linear Hartree potential (proportional to the dipole-dipole shift $\Delta_{\mathrm{dd}}$)
\begin{equation}\label{mfVH}
V_H\left[ \sigma(t) \right] =  \hbar \Delta_{\mathrm{dd}} (N-1) \, \big(\langle\sigma_+\rangle \sigma_- +\langle\sigma_-\rangle \sigma_+  \big)
\end{equation}
where $\langle\, \boldsymbol{\cdot}\,\rangle=\mathrm{Tr}[\boldsymbol{\cdot}\,\sigma(t)]$, $V_D$ is the non-linear dissipative potential
\begin{equation}\label{mfVD}
V_D\left[ \sigma(t) \right] = i \hbar \gamma \frac{N-1}{2} \, \big(\langle\sigma_+\rangle \sigma_- - \langle\sigma_-\rangle \sigma_+  \big),
\end{equation}
and $ \mathcal{D}_\mathrm{se}$ is the single-atom dissipator accounting for spontaneous emission
\begin{equation}\label{dissipmf}
\mathcal{D}_\mathrm{se}\left[ \sigma(t) \right] =  \gamma_0 \left( \sigma_- \sigma(t) \sigma_+ - \frac{1}{2} \left\{ \sigma_+ \sigma_-, \sigma(t) \right\} \right).
\end{equation}
Equation (\ref{mfeq}) cannot be solved analytically because of the presence of the term (\ref{dissipmf}). However, as $N$ gets large, this one can be neglected in comparison to (\ref{mfVH}) and (\ref{mfVD}) provided $\gamma\ne 0$ and $N-1$ can be replaced by $N$. Equation (\ref{mfeq}) can then be related for pure states $\sigma_\psi(t)=\ket{\psi(t)}\bra{\psi(t)}$ to a non-linear Schr\"odinger equation for $\ket{\psi(t)}$ of the form (in the interaction picture) \cite{Bre06, Man95, Pra14}
\begin{equation}\label{mfschro}
\frac{d \ket{\psi(t)}}{dt} = -\frac{i}{\hbar} \big(V_H[\sigma_\psi(t)] + V_D[\sigma_\psi(t)]\big) \ket{\psi(t)}.
\end{equation}
As in \cite{Bre06}, we parametrize the state $\ket{\psi(t)}$ by
\begin{equation}\label{psidecompmf}
\ket{\psi(t)} = \sqrt{p(t)} \, e^{i \theta(t)} \ket{e} +  \sqrt{1-p(t)}\, \ket{g}
\end{equation}
with $p(t)= |\langle e | \psi(t)\rangle|^2$ the mean number of atoms in the excited state. By inserting Eq.~(\ref{psidecompmf}) into (\ref{mfschro}) we get
\begin{subequations}\label{ptthetat} 
\begin{align}
& \frac{dp(t)}{dt} = - N \gamma \, p(t) [1-p(t)], \label{ptthetata} \\[5pt]
& \frac{d\theta(t)}{dt} = - N \Delta_{\mathrm{dd}} [1-p(t)].
\end{align}
\end{subequations}
With the conditions $p(t_I) = 1/2$ and $\theta(0) = \theta_0$, the system of equations (\ref{ptthetat}) has the unique solution
\begin{align}
& p(t) = \frac{1}{1+e^{N \gamma (t-t_I)}}, \\[5pt]
& \theta(t) = \theta_0 + \frac{\Delta_{\mathrm{dd}}}{\gamma} \ln\left[p(t)\left(1 + e^{-N \gamma t_I}\right)\right],
\end{align}
where $t_I$ corresponds to the time at which half the photons have been emitted and is identified with the delay time of superradiance~\cite{Gro82}. The phase $\theta(t)$ depend on both the dipole-dipole shift $\Delta_{\mathrm{dd}}$ and the decay rate $\gamma$, while the population $p(t)$ depends only on the rate $\gamma$. In other words, the dissipative dynamics is not affected by the dipole-dipole shift. The radiated energy rate in the mean-field approximation reads
\begin{equation}\label{RERmeanfield}
I_{\mathrm{mf}}(t) = -N \frac{d p(t)}{dt} = \frac{N^2 \gamma}{4} \cosh^{-2}{\left[ \frac{N \gamma}{2} (t-t_I)\right]},
\end{equation}
and is of the same form as the pure superradiant pulse for colocated atoms~\cite{Bre06, Gro82}, but with $\gamma_0$ replaced by $\gamma$ and \textit{a priori} a different delay time $t_I$. The quantum fluctuations of the atomic positions modify the value of $\gamma$ as compared to $\gamma_0$, and thus the shape of the superradiant pulse, which is however always present except for $\gamma=0$. The height $A_{I,\mathrm{mf}}$ of the pulse (\ref{RERmeanfield}), given by
\begin{equation}\label{Hmf}
A_{I,\mathrm{mf}} = \frac{N^2 \gamma}{4} =  \frac{N^2 \gamma_0}{4}\left(1-\frac{\gb}{\gamma_0}\right),
\end{equation}
is always smaller than the height $N^2 \gamma_0/4$ of the pure superradiant pulse since $\gamma \leqslant \gamma_0$. Equation~(\ref{Hmf}) is compared with numerical simulations in Fig.~\ref{HtHgamma2} (green dashed curve, top panel). As for the delay time $t_I$, it cannot be evaluated precisely in the mean-field approach. Nevertheless, an approximation can be obtained in the limit $N\to\infty$ and for $\gamma\ne 0$ by evaluating the sum of the typical times between each photon emission~\cite{Gro82}. We find
\begin{equation}\label{thslow}
t_I \sim \frac{\ln N}{N \gamma}
\end{equation}
which corresponds to the result of Gross and Haroche~\cite{Gro82} but with $\gamma_0$ replaced by $\gamma$. The ratio between the delay time for $\gb \ne 0$ and the one for $\gb = 0$ (pure superradiance) is thus given by
\begin{equation}\label{thsupmf}
\frac{t_I(\Delta\gamma)}{t_I(0)} = \frac{\gamma_0}{\gamma} = \frac{1}{1-(\gb/\gamma_0)}.
\end{equation}
It is always larger than $1$ and increases with $\Delta \gamma$, meaning that the larger $\Delta \gamma$ is, the longer it takes before the radiated energy rate attains a maximum. Equation~(\ref{thsupmf}) is compared with numerical simulations in Fig.~\ref{HtHgamma2} (green dashed curve, bottom panel).

\subsection{Subradiance}
\label{secSubradiance}

Subradiant states are states for which the radiated energy rate decays slowly as compared to the one corresponding to independent spontaneous emission.
Dark (or decoherence-free) states are a particular class of subradiant states for which the radiated energy rate (\ref{RER3}) vanishes.  According to  Eq.~(\ref{RER3coeff}), their only non-zero populations $\rho_J^{M,M}$ are those for which $J$ and $M$ are such that $c_{J}^{M}=0$. When $\gb = 0$, the condition $c_{J}^{M}=0$ is satisfied for $M = -J$~\cite{Kar07}. As a consequence, all states $\ket{J,-J}$ (in number $\alpha_N^{J_\mathrm{min}}$; see, e.g.,~\cite{Bra01}) are dark states. When $\gb > 0$, the only dark state is obtained for $J = M = N/2$ and corresponds to the ground state $\ket{g,\dotsc,g}$. 

In the following, we study the time evolution of the state $\ket{J_0,-J_0}$ (with $J_0\in\{J_{\mathrm{min}},\dotsc,N/2\}$) when $\gb > 0$. The initial non-zero matrix element $\rho_{J_0}^{-J_0,-J_0}$ is only coupled to the matrix elements $\rho_{J}^{-J,-J}$ with higher angular momenta $J$, i.e.\ $J_0 \leqslant J\leqslant N/2$, as can be seen from Fig.~\ref{rates}. The system will thus gradually populate all states $\ket{J, -J}$ with $J>J_0$ before finally reaching the ground state $\ket{N/2, -N/2}$. The populations $\rho_{J}^{-J,-J}$ are obtained from Eq.~(\ref{diffeq}), which simplifies to
\begin{equation}\label{subEQ}
\begin{aligned}
\frac{d\rho_{J}^{-J,-J}(t)}{dt} = & -\gb \Bigg[ \, \left(\frac{N}{2} - J\right)\rho_{J}^{-J,-J}(t) \\ 
&\, - \frac{d_N^{J-1}}{d_{N}^{J}}\left( \frac{N}{2} - J+1\right) \rho_{J-1}^{-J+1,-J+1}(t) \Bigg]
\end{aligned}
\end{equation}
and admits the solution
\begin{equation}
\rho_{J}^{-J,-J}(t) = \frac{\left(\frac{N}{2}-J_0\right)!\;e^{-\gb \left( \frac{N}{2} - J_0 \right) t}}{d_{N}^{J}\left(\frac{N}{2}-J\right)!\left(J-J_0 \right)!} \left(e^{\gb \, t} - 1 \right)^{J-J_0}.
\end{equation}
Inserting this expression into Eq.~(\ref{RER3}) for the radiated energy rate yields after some calculations
\begin{equation}\label{RERsub}
I(t) = \gb \left( \frac{N}{2} - J_0 \right) e^{-\gb t}.
\end{equation}
Hence, $I(t)$ decreases exponentially regardless of the initial angular momentum $J_0$, except for the case $J_0 = N/2$ (ground state) for which $I(t)=0$ at any time $t$. We also see that the states $\ket{J_0,-J_0}$ are subradiant, since the emission rate $\gb$ is always smaller than $\gamma_0$, the single-atom spontaneous emission rate.

\section{Conclusions}
We have investigated superradiance and subradiance from
indistinguishable atoms  with quantized motional state based on the
master equation derived in \cite{Dam16}.  The indistinguishability of
the atoms implies that for an initially factorized state between the
external (center-of-mass) and internal degrees of freedom the motional
state must be invariant under permutation of atoms.  As a consequence,
the whole dynamics is parametrized only by three real numbers, namely
the diagonal $\gamma_0$ and off-diagonal $\gamma\leqslant\gamma_0$ decay rates, and a dipole-dipole shift
$\Delta_{\mathrm{dd}}$ that is identical for all atoms. All three
parameters can be ``quantum-programmed'' by appropriate choice of the
motional state of the atoms. For $\gamma=\gamma_0$ standard
superradiance results, whereas for $\gamma\to 0$ individual
spontaneous emission of the atoms prevails. A continuous transition
between these two extreme cases can be achieved. A superradiant enhancement
of the emitted intensity is always observed for
$\gamma>\gamma_0/\sqrt{N-1}$ where $N$ is the number of atoms. All non-trivial dark states (i.e.~states other
than the ground state with strictly vanishing emission of radiation)
are immediately lost as soon as $\gamma<\gamma_0$. This implies that
for harmonically trapped atoms, exact decoherence free subspaces 
that protect against spontaneous emission through destructive
interference of individual spontaneous emission amplitudes exist only in the limit of
classically localized atoms, i.e.~atoms in infinitely steep traps. Finally, we showed that the states that are dark when $\gamma=\gamma_0$ are only subradiant when $\gamma<\gamma_0$.

\begin{acknowledgments}
FD would like to thank the FRS-FNRS (Belgium) for financial support.
FD is a FRIA (Belgium) grant holder of the Fonds de la Recherche Scientifique-FNRS (Belgium).
\end{acknowledgments}

\section*{Appendix A : Symmetry of density matrix under permutation of indistinguishable atoms}

In this Appendix, we give general properties under permutation of atoms of the (reduced) density matrices describing the states of indistinguishable atoms. 

Consider a set of $N$ indistinguishable atoms (bosons or fermions) with internal and external degrees of freedom. We define the orthonormal basis vectors as
$\ket{\bnu}\ket{\bphi}\equiv\ket{\nu_1,\ldots,\nu_N}\ket{\phi_1,\ldots,
  \phi_N}$, where $\ket{\nu_j}$ (resp.\ $\ket{\phi_j}$) are the
internal (resp.\ external) orthonormal basis states of the particle $j$. The permutation
operator $P_\pi$ corresponding to 
the permutation $\pi$ is defined through exchange of the particle labels
in the basis states, i.e.
\begin{equation}
  \label{eq:P}
  P_\pi\ket{\bnu}\ket{\bphi}=\ket{\nu_{\pi_1},\ldots,\nu_{\pi_N}}\ket{\phi_{\pi_1},\ldots,
    \phi_{\pi_N}}\equiv \ket{\bnu_\pi}\ket{\bphi_\pi}\,.
\end{equation}
We have $P_\pi = P_\pi^\mathrm{in}\otimes P_\pi^\mathrm{ex}$, where $P_\pi^\mathrm{in}$ and $P_\pi^\mathrm{ex}$ are such that $P_\pi^\mathrm{in} \ket{\bnu} = \ket{\bnu_\pi}$ and $P_\pi^\mathrm{ex} \ket{\bphi} = \ket{\bphi_\pi}$.

An arbitrary pure state $\ket{\psi}$ of the full system can be written as
\begin{equation}
  \label{eq:psifull}
  \ket{\psi}=\sum_{\bnu\bphi}\alpha_{\bnu\bphi}\ket{\bnu}\ket{\bphi}\,
\end{equation}
and must be invariant under permutations up to a global phase, i.e.\
$P_\pi\ket{\psi}=(\pm)^{p_\pi}\ket{\psi}$, where $p_\pi$ is the parity
of the permutation (even 
or odd), and $(\pm)^{p_\pi}$ the 
phase factor picked up accordingly for bosons ($+$) or fermions
($-$). 
Then we have the following:   
\begin{lemma}
An arbitrary mixed state $\rho$ of indistinguishable {\em bosons} or {\em fermions}  (density
operator on the full Hilbert space) satisfies 
\begin{equation}
  \label{eq:full}
  P_\pi\rho P_{\pi'}^\dagger=(\pm)^{p_\pi + p_{\pi'}}\rho \;\; \forall \pi,\,\pi'\,.
\end{equation}
\end{lemma}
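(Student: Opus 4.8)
The plan is to deduce this mixed-state identity from the single fact recalled just above the lemma, namely that every \emph{pure} state of the $N$ indistinguishable particles is a simultaneous eigenstate of all the $P_\pi$ with eigenvalue $(\pm)^{p_\pi}$. First I would invoke the spectral theorem to write an arbitrary physically admissible mixed state as $\rho=\sum_k p_k\,\ketbra{\psi_k}{\psi_k}$, with $p_k\geqslant 0$, $\sum_k p_k = 1$, and $\{\ket{\psi_k}\}$ an orthonormal family of physical pure states of the form~(\ref{eq:psifull}). Each of these satisfies $P_\pi\ket{\psi_k}=(\pm)^{p_\pi}\ket{\psi_k}$ for every $\pi$, with the upper (lower) sign for bosons (fermions).

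The core step is then a one-line computation. Since the phase factor $(\pm)^{p_{\pi'}}=\pm 1$ is real, taking the adjoint of $P_{\pi'}\ket{\psi_k}=(\pm)^{p_{\pi'}}\ket{\psi_k}$ gives $\bra{\psi_k}P_{\pi'}^\dagger=(\pm)^{p_{\pi'}}\bra{\psi_k}$, whence
\begin{equation}
P_\pi\rho P_{\pi'}^\dagger=\sum_k p_k\,\big(P_\pi\ket{\psi_k}\big)\big(\bra{\psi_k}P_{\pi'}^\dagger\big)=(\pm)^{p_\pi+p_{\pi'}}\sum_k p_k\,\ketbra{\psi_k}{\psi_k}=(\pm)^{p_\pi+p_{\pi'}}\rho,
\end{equation}
which is Eq.~(\ref{eq:full}). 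Setting $\pi=\pi'$ recovers in particular the familiar statement $P_\pi\rho P_\pi^\dagger=\rho$.

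The only subtle point, and the one I expect to be the main obstacle, is justifying the decomposition itself: one needs that a physically admissible mixed state of $N$ indistinguishable particles is a convex mixture of admissible \emph{pure} states, equivalently that $\mathrm{supp}\,\rho$ lies in the totally symmetric (bosonic) or totally antisymmetric (fermionic) subspace $\mathcal{H}_\pm\subset(\mathcal{H}^{(1)})^{\otimes N}$ — this is the symmetrization postulate read at the level of density operators. Granting it, the eigenvectors furnished by the spectral theorem lie in $\mathcal{H}_\pm$ and are therefore pure states of the type~(\ref{eq:psifull}), closing the argument. A slicker, decomposition-free route is available once one observes that $P_\pi$ restricted to $\mathcal{H}_\pm$ equals $(\pm)^{p_\pi}\mathbb{1}_{\mathcal{H}_\pm}$: then $P_\pi\rho=(\pm)^{p_\pi}\rho$ and $\rho P_{\pi'}^\dagger=(\pm)^{p_{\pi'}}\rho$ hold separately, and multiplying the two relations gives Eq.~(\ref{eq:full}) at once. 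I would present the pure-state version in the main text, since it dovetails with the discussion of $\ket{\psi}$ preceding the lemma, and relegate the operator-identity version to a remark.
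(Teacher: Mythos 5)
Your proof is correct and follows essentially the same route as the paper's: both write $\rho$ as a convex mixture of permutation-(anti)symmetric pure states and then apply $P_\pi$ from the left and $P_{\pi'}^\dagger$ from the right, picking up the phase $(\pm)^{p_\pi+p_{\pi'}}$. Your explicit flagging of the symmetrization postulate at the density-operator level, and the decomposition-free variant via $P_\pi|_{\mathcal{H}_\pm}=(\pm)^{p_\pi}\mathbb{1}_{\mathcal{H}_\pm}$, are welcome refinements of the same argument rather than a different one.
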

\begin{proof}
The mixed state of a system of indistinguishable bosons (fermions) must be 
a mixture of pure states that have all the full permutation symmetry (antisymmetry), i.e.~
\begin{equation}
  \label{eq:rho}
  \rho=\sum_i p_i \ketbra{\psi^{(i)}}{\psi^{(i)}}
\end{equation}
where $p_i$ are probabilities and $P_\pi\ket{\psi^{(i)}}=(\pm)^{p_\pi}\ket{\psi^{(i)}}$ for all $i$. Applying $P_\pi$ from the left and $P_{\pi'}^\dagger$ from the
right immediately yields the claim. 
\end{proof}

Consider now the reduced density matrix corresponding to the internal
degrees of freedom only. Inserting the decomposition \eqref{eq:psifull}
for each state $\ket{\psi^{(i)}}$ in the convex sum \eqref{eq:rho},
we obtain
\begin{equation}
  \rho^\mathrm{in}\equiv\tr_{\rm
    ex}\rho=\sum_{\bphi}\bra{\bphi}\rho\ket{\bphi} = \sum_ip_i\sum_{\bphi,\bnu,\bmu}\alpha_{\bnu\bphi}^{(i)}\alpha_{\bmu\bphi}^{(i)*}\ket{\bnu}\bra{\bmu}\,.   \label{eq:rhoin}
\end{equation}
Similarly, the reduced density matrix corresponding to the external degrees of freedom reads 
\begin{equation}
\rho^\mathrm{ex}\equiv\tr_{\rm
    in}\rho=\sum_{\bnu}\bra{\bnu}\rho\ket{\bnu} = \sum_ip_i\sum_{\bnu,\bphi,\bpsi}\alpha_{\bnu\bphi}^{(i)}\alpha_{\bnu\bpsi}^{(i)*}\ket{\bphi}\bra{\bpsi}\,.  
\end{equation}

Then we have the following
\begin{lemma}
The arbitrary reduced density matrices $\rho^\mathrm{in}$ and $\rho^\mathrm{ex}$ of indistinguishable
atoms (bosons or fermions) satisfy 
\begin{equation}
\begin{aligned}
  \label{eq:bos}
  &P^\mathrm{in}_\pi\rho^\mathrm{in} P_{\pi}^{\mathrm{in}\dagger}=\rho^\mathrm{in}  \quad\forall\, \pi\,, \\
  &P^\mathrm{ex}_\pi\rho^\mathrm{ex}  P_{\pi}^{\mathrm{ex}\dagger}=\rho^\mathrm{ex} \quad\forall\, \pi\,.
  \end{aligned}
\end{equation}
\end{lemma}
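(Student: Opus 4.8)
The plan is to obtain this statement as a one-line consequence of the previous lemma by taking partial traces, using only the factorization $P_\pi = P_\pi^\mathrm{in}\otimes P_\pi^\mathrm{ex}$. First I would specialize Eq.~\eqref{eq:full} to $\pi' = \pi$: since $(\pm)^{2p_\pi} = 1$ for every permutation, this gives $P_\pi\,\rho\,P_\pi^\dagger = \rho$ for all $\pi$, i.e.\ the full $N$-atom state is genuinely invariant (no phase) under simultaneous conjugation of the internal and external labels. Substituting the factorized form of $P_\pi$, this reads $(P_\pi^\mathrm{in}\otimes P_\pi^\mathrm{ex})\,\rho\,(P_\pi^{\mathrm{in}\dagger}\otimes P_\pi^{\mathrm{ex}\dagger}) = \rho$.

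Next I would apply $\tr_\mathrm{ex}$ to both sides. Two elementary identities do the job: (i) for any operators $A,B$ on the internal space and any $X$ on the full space, $\tr_\mathrm{ex}\!\big[(A\otimes\mathbb{1})X(B\otimes\mathbb{1})\big] = A\,(\tr_\mathrm{ex}X)\,B$; and (ii) for the \emph{unitary} $P_\pi^\mathrm{ex}$, cyclicity of the trace in the external factor gives $\tr_\mathrm{ex}\!\big[(\mathbb{1}\otimes P_\pi^\mathrm{ex})X(\mathbb{1}\otimes P_\pi^{\mathrm{ex}\dagger})\big] = \tr_\mathrm{ex}X$. Applying (i) then (ii) to the left-hand side turns it into $P_\pi^\mathrm{in}\,(\tr_\mathrm{ex}\rho)\,P_\pi^{\mathrm{in}\dagger} = P_\pi^\mathrm{in}\,\rho^\mathrm{in}\,P_\pi^{\mathrm{in}\dagger}$, while the right-hand side is simply $\tr_\mathrm{ex}\rho = \rho^\mathrm{in}$. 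This yields the first line of Eq.~\eqref{eq:bos}; the second follows verbatim by tracing over the internal degrees of freedom instead.

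An equivalent but more pedestrian route would start from the explicit expansion \eqref{eq:rhoin} of $\rho^\mathrm{in}$ (and its analogue for $\rho^\mathrm{ex}$): act with $P_\pi^\mathrm{in}$ so that $\ket{\bnu}\bra{\bmu}\mapsto\ket{\bnu_\pi}\bra{\bmu_\pi}$, relabel all the dummy indices $\bnu,\bmu,\bphi$ by $\pi$ (equivalently $\pi^{-1}$), and invoke the (anti)symmetry $P_\pi\ket{\psi^{(i)}} = (\pm)^{p_\pi}\ket{\psi^{(i)}}$, which forces $\alpha_{\bnu_\pi\bphi_\pi}^{(i)} = (\pm)^{p_\pi}\alpha_{\bnu\bphi}^{(i)}$; using this twice makes the two phases cancel in the bilinear combination $\alpha_{\bnu\bphi}^{(i)}\alpha_{\bmu\bphi}^{(i)*}$, recovering $\rho^\mathrm{in}$. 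The only point that needs a little care — in either route — is the bookkeeping: in the partial-trace argument, checking the tensor-product/trace identities (i)–(ii); in the component argument, noticing that the same permutation must be applied to the external dummy label $\bphi$ as to $\bnu,\bmu$, so that the per-state phase genuinely cancels rather than survives. Neither is deep, and I would write out the partial-trace version since it is the shortest.
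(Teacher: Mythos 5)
Your proposal is correct, and your preferred (partial-trace) route is genuinely different from the paper's. The paper proves the statement by brute force on components: it derives the coefficient symmetry $(\pm)^{p_\pi}\alpha^{(i)}_{\bnu\bphi}=\alpha^{(i)}_{\bnu_{\pi^{-1}}\bphi_{\pi^{-1}}}$ from $P_\pi\ket{\psi^{(i)}}=(\pm)^{p_\pi}\ket{\psi^{(i)}}$, conjugates the expansion \eqref{eq:rhoin} by $P_\pi^{\mathrm{in}}$, absorbs $\pi$ into the dummy sum over $\bphi$, and lets the two phases cancel in the bilinear $\alpha^{(i)}_{\bnu\bphi}\alpha^{(i)*}_{\bmu\bphi}$ --- exactly your ``more pedestrian route,'' including the bookkeeping caveat you flag about applying the same permutation to the external label. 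Your primary argument instead specializes Lemma~1 to $\pi'=\pi$ to get $P_\pi\rho P_\pi^\dagger=\rho$ and then pushes this through $\tr_{\mathrm{ex}}$ using the factorization $P_\pi=P_\pi^{\mathrm{in}}\otimes P_\pi^{\mathrm{ex}}$ and the two standard partial-trace identities; both identities are easily verified on a decomposition $X=\sum_k A_k\otimes B_k$, so the argument is sound. What your version buys is brevity, manifest symmetry between the internal and external cases (the paper only writes out one and asserts the other), and a clean separation of concerns: all the combinatorics of the coefficient symmetry is quarantined inside Lemma~1, and the reduced-state statement becomes a structural consequence of unitary invariance of the partial trace. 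What the paper's version buys is that the explicit coefficient manipulation is reused verbatim in the remark immediately following the lemma, which explains why $P_\pi\rho^{\mathrm{in}}P_{\pi'}^\dagger\ne\rho^{\mathrm{in}}$ for $\pi\ne\pi'$ --- a point your partial-trace argument cannot address, since identity (ii) requires the \emph{same} unitary on both sides.
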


\begin{proof}
We present here the proof for $\rho^\mathrm{in}$. The symmetry of the full state implies the symmetry of the
coefficients $\alpha_{\bnu\bphi}^{(i)}$:
\begin{eqnarray}
      \label{eq:syma}
P_\pi\ket{\psi^{(i)}}&=&\sum_{\bnu\bphi}\alpha_{\bnu\bphi}^{(i)}\ket{\bnu_\pi}\ket{\bphi_\pi}\\
&=&\sum_{\bnu\bphi}\alpha_{\bnu_{\pi^{-1}}\bphi_{\pi^{-1}}}^{(i)}\ket{\bnu}\ket{\bphi}  =(\pm)^{p_\pi}\ket{\psi^{(i)}}\,,            
\end{eqnarray}
and projecting onto the basis states gives
\begin{equation}
  \label{eq:as}
  (\pm)^{p_\pi}\alpha^{(i)}_{\bnu\bphi}=\alpha^{(i)}_{\bnu_{\pi^{-1}}\bphi_{\pi^{-1}}}\,.
\end{equation}
As a consequence,
\begin{equation}
\begin{aligned}
  \label{eq:symrhoin}
  P_\pi^\mathrm{in}\rho^\mathrm{in}  P_\pi^{\mathrm{in}\dagger}&= \sum_{i, \bphi,\bnu,\bmu}p_i\alpha^{(i)}_{\bnu\bphi}\alpha^{(i)*}_{\bmu\bphi}\ket{\bnu_\pi}\bra{\bmu_\pi}\\
&=\sum_{i, \bphi,\bnu,\bmu}p_i\alpha^{(i)}_{\bnu_{\pi^{-1}}\bphi}\alpha^{(i)*}_{\bmu_{\pi^{-1}}\bphi}\ket{\bnu}\bra{\bmu}
\\
&=\sum_{i, \bphi,\bnu,\bmu}p_i\alpha^{(i)}_{\bnu_{\pi^{-1}}\bphi_{\pi^{-1}}}\alpha^{(i)*}_{\bmu_{\pi^{-1}}\bphi_{\pi^{-1}}}\ket{\bnu}\bra{\bmu}\nonumber  =\rho^{\mathrm{in}}\,,
\end{aligned}
\end{equation}
where in the penultimate step permutation $\pi$ was
absorbed in the sum over all $\bphi$, and the last step follows from
Eqs.~(\ref{eq:as}) and (\ref{eq:rhoin}).
\end{proof}
Note that in general for the reduced density matrix $\rho^\mathrm{in}$ the statement
corresponding to Eq.~(\ref{eq:full}) does not
hold, i.e.~$P_\pi\rho^\mathrm{in} P_{\pi'}^\dagger\ne\rho^\mathrm{in}$  for
$\pi\ne\pi'$:  Going through the last proof again with the second
$\pi$ 
replaced by $\pi'$, one realizes that in at least one of
the coefficients $\alpha^{(i)}_{\bnu_{\pi^{-1}}\bphi}$ or
$\alpha^{(i)*}_{\bmu_{\pi^{'-1}}\bphi}$, $\bphi$ cannot be replaced by
  $\bphi_{\pi^{-1}}$ or $\bphi_{\pi^{'-1}}$ if $\pi\ne \pi'$, and in
  general $\alpha_{\bnu\bphi}\ne\alpha_{\bnu_{\pi^{-1}}\bphi_{\pi^{'-1}}}$
  even for bosons.

\section*{Appendix B : General expressions of decay rates and dipole-dipole shifts}

In this Appendix, we show that all
off-diagonal ($i\ne j$) decay rates $\gamma_{ij}$ and all dipole-dipole shifts $\Delta_{ij}$ are equal for any pair of indistinguishable atoms $i$ and $j$ in arbitrary permutation invariant motional states. Then, we give their general expressions for arbitrary symmetric or antisymmetric motional states.

As shown in~\cite{Dam16}, the diagonal decay rates are equal to the single-atom spontaneous emission rate $\gamma_0$ for any motional state while the off-diagonal decay rates and dipole-dipole shifts are given, respectively, by
\begin{align}\label{gammaijconv}
\gamma_{ij} &= \int_{\mathbb{R}^3} \gamma^{\mathrm{cl}}(\mathbf{r}) \,\mathcal{F}^{-1}_{\mathbf{r}} \left[\corkmz\right] d\mathbf{r}, \\
\label{deltaijconv2}
\Delta_{ij} &= \int_{\mathbb{R}^3} \Delta^{\mathrm{cl}}(\mathbf{r}) \,\mathcal{F}^{-1}_{\mathbf{r}} \left[\corkmz\right] d\mathbf{r},
\end{align}
with $\mathcal{F}^{-1}_{\mathbf{r}} \left[\corkmz\right]$ the inverse Fourier transform of the motional correlation function~\cite{footnote}
\begin{equation}\label{cijex}
\corkmz = \mathrm{Tr}_{\mathrm{ex}}\left[ e^{i\mathbf{k}\boldsymbol{\cdot} \hat{\mathbf{r}}_{ij}} \rho_A^\mathrm{ex} \right],
\end{equation}
where $\hat{\mathbf{r}}_{ij} = \hat{\mathbf{r}}_{i} - \hat{\mathbf{r}}_j$ is the difference between the position operators of atoms $i$ and $j$. In Eqs.~(\ref{gammaijconv}) and (\ref{deltaijconv2}), $\gamma^{\mathrm{cl}}(\mathbf{r}) $ and
$\Delta^{\mathrm{cl}}(\mathbf{r})$ are the classical expressions of the
decay rates and dipole-dipole shifts, respectively,
for a pair of atoms connected by $\mathbf{r}$ and a radiation of wavenumber $k_0$~\cite{Ste64,Leh70,Aga74},
\begin{equation}
\label{gammaijcl}
\gamma^{\mathrm{cl}}(\mathbf{r}) =\frac{3  \gamma_0 }{2}\Bigg[ p \,\frac{\sin (k_0 r)}{k_0 r} + q  \left( \frac{\cos(k_0 r)}{(k_0 r)^2} - \frac{\sin(k_0 r)}{(k_0 r)^3}\right)\Bigg]
\end{equation}
and
\begin{equation}
\label{deltaijcl}
\Delta^{\mathrm{cl}}(\mathbf{r}) =\frac{3  \gamma_0 }{4}\Bigg[ - p \,\frac{\cos(k_0 r)}{k_0 r} + q  \left( \frac{\sin(k_0 r)}{(k_0 r)^2} + \frac{\cos(k_0 r)}{(k_0 r)^3}\right) \Bigg].
\end{equation} 
with $p$ and $q$ angular factors given by
\begin{equation}\label{p}
p=\begin{cases}
\sin^2 \alpha & \mbox{for a $\pi$ transition}\\
\tfrac{1}{2}(1+\cos^2 \alpha) & \mbox{for a $\sigma^\pm$ transition}
\end{cases}
\end{equation} 
and
\begin{equation}\label{q}
q=\begin{cases}
1-3 \cos^2 \alpha & \mbox{for a $\pi$ transition}\\
\tfrac{1}{2}(3 \cos^2 \alpha-1) & \mbox{for a $\sigma^\pm$
  transition,}
\end{cases}
\end{equation} 
where $\alpha=\arccos(\mathbf{e}_r\boldsymbol{\cdot}\mathbf{e}_z)$ is the angle between the quantization axis and $\mathbf{r}$.

Indistinguishability of atoms implies that their motional state is invariant under permutation [see Appendix A], i.e.\
\begin{equation}
P^\mathrm{ex}_\pi\rho_A^\mathrm{ex}  P_{\pi}^{\mathrm{ex}\dagger}=\rho_A^\mathrm{ex} \quad\forall\, \pi\,.
\end{equation}
Upon using the latter equation, the motional correlation function (\ref{cijex}) is found to satisfy
\begin{equation}\label{equality}
\begin{aligned}
\corkmz &= \mathrm{Tr}_{\mathrm{ex}}\left[ e^{i\mathbf{k}\boldsymbol{\cdot} \hat{\mathbf{r}}_{ij}} P^\mathrm{ex}_\pi \rho_A^\mathrm{ex} P^{\mathrm{ex}\dagger}_\pi\right] \\
&= \mathrm{Tr}_{\mathrm{ex}}\left[P^{\mathrm{ex}\dagger}_\pi e^{i\mathbf{k}\boldsymbol{\cdot} \hat{\mathbf{r}}_{ij}} P^\mathrm{ex}_\pi \rho_A^\mathrm{ex} \right] \\
&= \mathrm{Tr}_{\mathrm{ex}}\left[ e^{i\mathbf{k}\boldsymbol{\cdot} \hat{\mathbf{r}}_{\pi(i)\pi(j)}} \rho_A^\mathrm{ex} \right] = \mathcal{C}_{\pi(i)\pi(j)}^\mathrm{ex}(\mathbf{k}).
\end{aligned}
\end{equation}
The equality of $\corkmz$ for any pair of atoms [Eq.~(\ref{equality})] implies the equality of the decay rates (\ref{gammaijconv}) [or the dipole-dipole shifts (\ref{deltaijconv2})] for any pair of atoms.

Consider now an arbitrary symmetric or antisymmetric motional state of the form
\begin{equation}\label{NatMSmix}
\rho_A^{\mathrm{ex},\pm} = \sum_{m = 1}^M p_m \big|\Phi_A^{(m),\pm}\big\rangle \big\langle\Phi_A^{(m),\pm}\big|,
\end{equation}
where $p_m$ are the weights of the statistical mixture ($p_m \geq 0$ and $\sum_m p_m = 1$)
and $\big|\Phi_A^{(m),\pm}\big\rangle$ ($m = 1, \dotsc, M$) are symmetric ($+$) or antisymmetric ($-$) $N$-atom motional pure states. Any state $\big|\Phi_A^{(m),\pm}\big\rangle$ can be written as
\begin{equation}\label{NatMS}
\begin{aligned}
\big|\Phi_A^{(m),\pm}\big\rangle=\sqrt{\frac{n_{\phi_1^{(m)}}!\cdots n_{\phi_N^{(m)}}!}{N!}}\,\sum_{\pi}  (\pm 1)^{p_\pi} \,
\big|\phi_{\pi(1)}^{(m)}
\cdots \phi_{\pi(N)}^{(m)}\big\rangle
\end{aligned}
\end{equation}
where $\big|\phi_{j}^{(m)}\big\rangle$ ($j = 1, \dotsc, N$) are normalized (but not necessarily orthogonal) single-atom motional states, $n_{\phi_j^{(m)}}$ is the number of atoms occupying the state $\big|\phi_j^{(m)}\big\rangle$, and the sum runs over all permutations $\pi$ of the atoms.

The off-diagonal decay rates and the dipole-dipole shifts for the motional state (\ref{NatMSmix}) can be expressed in terms of exchange integrals as~\cite{Dam16}
\begin{widetext}
\begin{align}
&\gamma_{ij} = \sum_{m = 1}^{M} p_m \sum_{\pi, \pi'} \lambda_{ij,\pi\pi'}^{(m),\pm} \iint_{\mathbb{R}^3\times \mathbb{R}^3} \gamma^{\mathrm{cl}}(\mathbf{r}-\mathbf{r}') \,\phi_{\pi(i)}^{(m)}(\mathbf{r}) \, \phi_{\pi'(i)}^{(m)*}(\mathbf{r}) \, \phi_{\pi(j)}^{(m)}(\mathbf{r}') \,\phi_{\pi'(j)}^{(m)*}(\mathbf{r}')
 \,d\mathbf{r}\, d\mathbf{r}', \label{indisEXCHANGE1} \\
&\Delta_{ij} =  \sum_{m = 1}^{M} p_m  \sum_{\pi, \pi'} \lambda_{ij,\pi\pi'}^{(m),\pm} \iint_{\mathbb{R}^3\times \mathbb{R}^3} \Delta^{\mathrm{cl}}(\mathbf{r}-\mathbf{r}') \,\phi_{\pi(i)}^{(m)}(\mathbf{r})\, \phi_{\pi'(i)}^{(m)*}(\mathbf{r}) \, \phi_{\pi(j)}^{(m)}(\mathbf{r}') \, \phi_{\pi'(j)}^{(m)*}(\mathbf{r}')
 \,d\mathbf{r}\, d\mathbf{r}',\label{indisEXCHANGE2}
\end{align}
\end{widetext}
with $\phi_j^{(m)}(\mathbf{r}) = \big\langle \mathbf{r} | \phi_j^{(m)}\big\rangle$ the single-atom motional states in the position representation, 
\begin{equation}\label{pij}
 \lambda_{ij,\pi\pi'}^{(m),\pm} = \frac{\displaystyle  (\pm 1)^{p_\pi + p_{\pi'}}  \, \prod_{n = 1 \atop n\neq i,j}^N \big\langle\phi_{\pi'(n)}^{(m)}\big|\phi_{\pi(n)}^{(m)}\big\rangle}{\displaystyle \sum_{\tilde{\pi},\tilde{\pi}'}  (\pm 1)^{p_{\tilde{\pi}} + p_{\tilde{\pi}'}} \prod_{n = 1}^N\big\langle\phi_{\tilde{\pi}'(n)}^{(m)}\big|\phi_{\tilde{\pi}(n)}^{(m)}\big\rangle}.
\end{equation} 

The cooperative decay rates and dipole-dipole shifts (\ref{indisEXCHANGE1}) and (\ref{indisEXCHANGE2}) depend on their classical expressions (\ref{gammaijcl}) and (\ref{deltaijcl}), which oscillate and decrease as a function of the interatomic distance on a length scale of the order of the wavelength of the emitted radiation. In addition, they  depend on the single-atom wavepackets and can vary as a function of their extensions and overlaps. The indistinguishability of atoms is reflected by the summations over all permutations of the atoms, which implies the equality of all off-diagonal decay rates $\gamma_{ij}$ and all dipole-dipole shifts $\Delta_{ij}$.

Note that when all atoms occupy the same motional state $\rho_1$ with spatial density $\rho_1(\mathbf{r})=\langle\mathbf{r}|\rho_1|\mathbf{r}\rangle$, the global motional state $\rho_A^\mathrm{ex}=\rho_1^{\otimes N}$ is symmetric and separable and the decay rates (\ref{indisEXCHANGE1}) and dipole-dipole shifts (\ref{indisEXCHANGE2}) merely read
\begin{equation}\label{beEXCHANGE1}
\gamma_{ij} = \iint_{\mathbb{R}^3\times \mathbb{R}^3} \gamma^{\mathrm{cl}}(\mathbf{r}-\mathbf{r}') \:\rho_1(\mathbf{r}) \:\rho_1(\mathbf{r}') \,d\mathbf{r}\, d\mathbf{r}',
\end{equation}
\begin{equation}\label{beEXCHANGE2}
\Delta_{ij} = \iint_{\mathbb{R}^3\times \mathbb{R}^3} \Delta^{\mathrm{cl}}(\mathbf{r}-\mathbf{r}') \:\rho_1(\mathbf{r}) \:\rho_1(\mathbf{r}')  \,d\mathbf{r}\, d\mathbf{r}'.
\end{equation}

\section*{Appendix C : General solution for $2$ atoms}

In this Appendix, we give the most general solution of the master equation (\ref{meqsummary}) for $N =2 $ atoms. In this case, $J = 0,1$ and the decomposition (\ref{Hdecomp}) of the internal Hilbert space of the atomic system reads
\begin{equation}\label{HdecompN2}
\mathcal{H} = \mathbb{C}^2 \otimes \mathbb{C}^2  \simeq \left(\mathcal{H}_0 \otimes \mathcal{K}_0 \right) \oplus \left(\mathcal{H}_1 \otimes \mathcal{K}_1\right),
\end{equation}
where the dimensions of $\mathcal{K}_0$ and $\mathcal{K}_1$ are $d_0 = d_1 = 1$. The value $J = 1$ defines the triplet states $\left\{\ket{1,1}, \ket{1,0}, \ket{1,-1}\right\}$ which are all symmetric while the value $J = 0$ corresponds to the singlet state $\ket{0,0}$, which is antisymmetric. In the standard basis $\left\{\ket{e,e},\ket{e,g},\ket{g,e},\ket{g,g}\right\}$, they read
\begin{equation}
\begin{array}{l} \ket{1,1} = \ket{e,e}, \\[6pt]
\displaystyle \ket{1,0} = \frac{\ket{e,g}+\ket{g,e}}{\sqrt{2}}, \\[10pt]
\ket{1,-1} = \ket{g,g}.
\end{array}
\quad
\ket{0,0} = \frac{\ket{e,g}-\ket{g,e}}{\sqrt{2}},
\end{equation}
The solutions of (\ref{diffeq}) for the density matrix elements $\rho_{J}^{M,M'}(t)$ in terms of $\gamma, \gb$ and $\Delta_{\mathrm{dd}}$ are in this case given by
\begin{equation}\label{fullsolutionN2}
\begin{aligned}
&\rho_{1}^{1,1}(t) = \rho_{1}^{1,1}(0)\, e^{-2(\gamma +\gb)t}, \\[5pt]
&\rho_{1}^{0,0}(t) = \rho_{1}^{0,0}(0)\, e^{-(2 \gamma + \gb) t} + \frac{2 \gamma + \gb}{\gb} \rho_{1}^{1,1}(t) \left(e^{\gb t} -1\right), \\[5pt]
&\rho_{1}^{-1,-1}(t) = 1 - \rho_{1}^{1,1}(t)- \rho_{1}^{0,0}(t) - \rho_{0}^{0,0}(t), \\[5pt]
&\rho_{0}^{0,0}(t) = \rho_{0}^{0,0}(0)\, e^{-\gb t} + \frac{\gb}{2 \gamma + \gb} \rho_{1}^{1,1}(t) \left(e^{(2 \gamma + \gb) t} -1\right), \\[5pt]
&\rho_{1}^{1,0}(t) = \rho_{1}^{1,0}(0) e^{-(4\gamma + 3\gb + 2i \Delta_{\mathrm{dd}})t/2}, \\[5pt]
&\rho_{1}^{1,-1}(t) = \rho_{1}^{1,-1}(0) e^{-(\gamma + \gb)t},\\[5pt]
&\rho_{1}^{0,-1}(t) = \rho_{1}^{0,-1}(0) e^{-(2 \gamma + \gb - 2 i \Delta_{\mathrm{dd}}) t/2} \\
&\hspace{0.5cm}+ \rho_{1}^{1,0}(t) \frac{2 \gamma + \gb}{\gamma + \gb + 2 i \Delta_{\mathrm{dd}}} \left(e^{( \gamma + \gb  + 2i \Delta_{\mathrm{dd}})t} - 1 \right).
\end{aligned}
\end{equation}

\end{document}